\DeclareMathAlphabet{\mathbfi}{OML}{cmm}{b}{it}
\let\originalleft\left
\let\originalright\right
\renewcommand{\left}{\mathopen{}\mathclose\bgroup\originalleft}
\renewcommand{\right}{\aftergroup\egroup\originalright}
\newcommand{\mathe}{\mathrm{e}}
\newcommand{\mathi}{\mathrm{i}}
\newcommand{\total}{\mathop{}\!\mathrm{d}}
\newcommand{\eqend}[1]{\,#1}
\newcommand{\man}{\mathcal{M}}
\newtheorem{theorem}{\textsc{Theorem}}[section]
\newtheorem{lemma}[theorem]{\textsc{Lemma}}
\newtheorem{proposition}[theorem]{\textsc{Proposition}}
\newtheorem{corollary}[theorem]{\textsc{Corollary}}
\newtheorem{example}[theorem]{\textsc{Example}}
\journal{Journal of Geometry and Physics}
\begin{document}

\title{Global Hyperbolicity and Self-adjointness}

\author[1]{Markus B. Fr{\"o}b}
\ead{mfroeb@itp.uni-leipzig.de}
\author[1]{Albert Much}
\ead{much@itp.uni-leipzig.de}
\author[2]{Kyriakos Papadopoulos\corref{cor1}}
\ead{kyriakos.papadopoulos@ku.edu.kw}

\affiliation[1]{organization={Institut f{\"u}r Theoretische Physik, Universit{\"a}t Leipzig}, addressline={Br{\"u}derstra{\ss}e 16}, postcode={04103}, city={Leipzig}, country={Germany}}
\affiliation[2]{organization={Department of Mathematics, Faculty of Science, Kuwait University}, postcode={Safat 13060}, city={Sabah Al Salem University City}, country={Kuwait}}

\cortext[cor1]{Corresponding author}

\begin{abstract}
We show that the spatial part of the Klein--Gordon operator is an essentially self-adjoint operator on the Cauchy surfaces of various classes of spacetimes. Our proof employs the intricate connection between global hyperbolicity and geodesically complete Riemannian surfaces, and concludes by proving global hyperbolicity of the spacetimes under study.
\end{abstract}

\begin{keyword}
Globally hyperbolic spacetimes; complete Riemannian manifolds; Klein--Gordon equation; Essentially self-adjoint operator
\end{keyword}

\maketitle

\tableofcontents
\clearpage

\section{Introduction}

Global hyperbolic spacetimes are important in the study of general relativity and quantum field theory due to their well-defined and robust causal structure. This causal structure ensures the stability and predictability of physical processes within the spacetime, namely it ensures the existence of a well-defined Cauchy problem. Global hyperbolicity is the most strict form of causal structure in a spacetime~\cite{Minguzzi:2019mbe}. A spacetime $(\man, g)$ (which we take to be Hausdorff, second countable, connected, time-oriented and with $C^p$ metric $g$ for $p \geq 2$) is said to be globally hyperbolic if it is causal (there exist no closed causal curves) and for every pair of points $p$ and $q$ in $\man$, the causal diamonds $J^-(p) \cap J^+(q)$ are compact. The definition of global hyperbolicity has undergone a number of changes in the last decades, with one of the first results being that the condition of strong causality in the original definition~\cite{hawkingellis} can be replaced by causality~\cite{Bernal:2006xf}. Moreover, it is now known that if the dimension of $\man$ is at least 3 and $\man$ is non-compact, the causality condition follows in fact from the compactness of the causal diamonds~\cite{Hounnonkpe:2019rel}. Another equivalent definition of a globally hyperbolic spacetime is the existence of a Cauchy surface~\cite{Geroch:1970uw}. A Cauchy surface $\Sigma\subset \man$ for $(\man,g)$ is a closed set such that no timelike curve intersects $\Sigma$ more than once, and whose domain of dependence (the set of all points $p$ in $\man$ such that every inextendible causal curve through $p$ intersects $\Sigma$) is $\man$.

The non-existence of closed timelike curves in a global hyperbolic spacetime ensures that the initial-value problem in classical field theory is well-defined. That is, specifying initial data on a Cauchy surface uniquely determines the future and past evolution of the fields. This then ensures that also the construction of quantum fields is free of pathologies, and we refer the interested reader to the reviews~\cite{Hollands:2014eia,Fredenhagen:2014lda} and references therein. On the other hand, while it is sometimes possible to define quantum fields also on non-globally hyperbolic spacetimes, various pathologies can arise in these cases (see for example Refs.~\cite{Wald:1980jn,Kay:1992es,Ishibashi:2003jd,Seggev:2003rp}).

In this context, one of the main aspects is the connection between global hyperbolicity and the self-adjointness of certain differential operators that are derived from the equations of motion of the theory. In particular, for the simplest quantum field theory, the theory of a linear Hermitean scalar field, the relevant differential operator is the spatial part of the Klein--Gordon operator. For a correct formulation of the quantum theory even in this simple case, it needs to be essentially self-adjoint~\cite{Ashtekar:1975zn,Kay:1978yp}, see also the recent works~\cite{derezinskisiemssen2019,gerardwrochna2019,dappiaggidragoferreira2019,nakamurataira2023} and references therein for related approaches and new results. Namely, since the initial-value problem is well-defined, the full Hilbert space of the quantum theory can be taken to be the Fock space $\mathcal{F}$ constructed over the Hilbert space $\mathcal{H}$ of initial-value data. In turn, $\mathcal{H}$ is constructed from the classical symplectic space of initial-value data with the canonical symplectic form $\sigma$ and a compatible complex structure $J$, which define the scalar product $\mu$ of $\mathcal{H}$ according to $\mu(f, g) = \sigma(f, J g)$. For a static spacetime, using the well-known decomposition of solutions $\phi$ of the Klein--Gordon equation into positive and negative frequency parts $\phi^\pm$, the corresponding complex structure is simply $J \phi = \mathi \phi^+ - \mathi \phi^-$, such that the formulation with a general $J$ generalizes the construction of the quantum theory to arbitrary globally hyperbolic spacetimes. Of course, in the general case $J$ will be time-dependent. In turn, given a complex structure $J$ and a timelike curve with tangent vector $t^\mu$, the corresponding Schr{\"o}dinger equation reads $H \phi = - J (\mathcal{L}_t \phi)$, where $\mathcal{L}$ denotes the Lie derivative. In fact, this equation defines the one-particle Hamiltonian that generates translations along the integral curves of $t^\mu$, and it has been shown that~\cite[Appendix]{Ashtekar:1975zn}, \cite[Thm.~4.1]{Kay:1978yp}
\begin{enumerate}
\item[a)] for static and stationary spacetimes there exists a unique complex structure $J$ such that the expectation value of $H$ in a one-particle state is equal to the classical energy of the solution with the same initial-value data, and
\item[b)] in the static case and taking $t^\mu$ to be the static Killing vector field, $J$ is the complex structure determined by the decomposition into positive and negative frequency parts, and $H$ is the usual Hamiltonian.
\end{enumerate}

For a general spacetime, the construction of a suitable complex structure $J$ is quite involved. It is here that the self-adjointness of the spatial part of the Klein--Gordon operator comes into play, since one can define $J$ from its spectral decomposition~\cite[Thm.~7.1]{Kay:1978yp}, \cite{Brum:2013bia,Much:2018ehc}, \cite[Sec.~7.2]{Corichi:2002qd}. Furthermore, the self-adjointness of the full Klein--Gordon operator is essential for the construction of Feynman propagators in general globally hyperbolic spacetimes~\cite{derezinskisiemssen2019,Siemssen:2019qlb}, for which in turn a key assumption is the self-adjointness of the spatial part of the Klein--Gordon operator~\cite[Assump.~1]{derezinskisiemssen2019}. Of course, for a linear theory both approaches are completely equivalent since one can construct a complex structure from a Feynman propagator and vice versa, hence we see that self-adjointness of the spatial part of the Klein--Gordon operator is essential for a correct formulation of the quantum theory.

Fortunately, many spacetimes of physical importance are globally hyperbolic, including Minkowski spacetime, the exterior region of the Schwarzschild black hole spacetime, and the Friedmann--Lema{\^\i}tre--Robertson--Walker (FLRW) spacetimes relevant for cosmology. In this paper, we show that for all these spacetimes and several others, the spatial part of the Klein-Gordon operator is essentially self-adjoint. Our proof connects existing results in the literature, linking global hyperbolicity to geodesic completeness of the Cauchy hypersurfaces and finally the self-adjointness of the spatial part of the Klein--Gordon operator.

In addition to providing the proof of essential self-adjointness, our work has further applications. Collecting and integrating various theorems scattered throughout the literature, we demonstrate that specific \emph{classes} of spacetimes are globally hyperbolic without requiring the explicit construction of Cauchy surfaces or proving the compactness of the intersection of causal past and future sets. Since these proofs are often quite technical, our work provides a simplification of the proofs of global hyperbolicity for these classes of spacetimes. Moreover, our results can be used in the context of the semiclassical Einstein equations
\begin{equation}
\label{see}
R_{\mu\nu} - \frac{1}{2} R g_{\mu\nu} = 8 \pi G_\text{N} \, \omega\Bigl( T^\text{ren}_{\mu\nu} \Bigr) \eqend{,}
\end{equation}
where $R_{\mu\nu}$ is the Ricci tensor of the spacetime, $G_\text{N}$ is Newton's constant, $\omega$ is the quantum state (constructed for a linear theory from the scalar product $\mu$ of $\mathcal{H}$), and $T^\text{ren}_{\mu\nu}$ is the renormalized stress tensor in the quantum theory. Eq.~\eqref{see} can be seen as a first step towards a theory of quantum gravity, where one incorporates the back-reaction of quantum fields on the geometry. However, it is very difficult to solve in general: one needs to find a globally hyperbolic spacetime $(\man,g)$ and quantum state $\omega$ which itself depends on $g$, such that the equation is fulfilled self-consistently. We refer the interested reader to the works~\cite{Hack:2010iw,Pinamonti:2013wya,Gottschalk:2018kqt,Meda:2020smb,Meda:2021zdw} and references therein for various mathematically rigorous approaches to this problem. Knowing that a large class of spacetimes is globally hyperbolic can help in solving the semiclassical Einstein equations~\eqref{see} or in showing that solutions exist, since for a given class of spacetimes they reduce to differential equations for the arbitrary functions on which the metric depends.

This article is structured as follows: In Sec.~\ref{sec:kg}, we review the connection between global hyperbolicity, the completeness of the Cauchy surfaces and the self-adjointness of the spatial part of the Klein--Gordon operator, combining the results of Refs.~\cite{Bernal:2003jb,Bernal:2004gm,Much:2018ikx}. In Sec.~\ref{sec:global} we first collect known results regarding global hyperbolicity and complete Riemannian manifolds, both general and for certain classes of spacetimes. Building on those, we present new results on complete Riemannian manifolds (Props.~\ref{prop:bounded}, \ref{prop:warpedsphere}, \ref{prop:warpedhom}) and the global hyperbolicity of certain warped product manifolds (Props.~\ref{prop:nonstatic-twodim} and~\ref{prop:nonstatic-fourdim}). We also present a proof for the known and widely used result of Prop.~\ref{prop:onedim-arclength}, where we haven't been able to locate an existing proof. We then apply the general theorems to explicit examples: wormhole spacetimes in Sec.~\ref{sec:wormhole}, spherically symmetric, static spacetimes in Sec.~\ref{sec:spherical} and non-static spacetimes in Sec.~\ref{sec:nonstatic}. Some of these, such as the exterior Schwarzschild and Reissner--Nordström spacetimes, the de~Sitter spacetime or the FLRW spacetimes, are known to be globally hyperbolic, but our treatment has the advantage that the proofs are a simple corollary of the results of Sec.~\ref{sec:global}. To the best of our knowledge, our results for the global hyperbolicity (and the connection to the essential self-adjointness of the spatial part of the Klein--Gordon operator) of the other examples, namely the Morris--Thorne wormhole spacetime and its generalizations, topological black holes, and type I and IX Bianchi universes, are new.

\section{The spatial Klein--Gordon Equation}
\label{sec:kg}

Globally hyperbolic spacetimes admit a global time function, making them topologically equivalent to $\man \sim \mathbb{R} \times \Sigma$ for a Cauchy surface $\Sigma$~\cite{Geroch:1970uw}. In~\cite{Bernal:2003jb}, Bernal and S{\'a}nchez resolved a long-standing conjecture by proving that every globally hyperbolic spacetime possesses a \emph{smooth} foliation into Cauchy surfaces~\cite[Theorem 1.1]{Bernal:2003jb}. Furthermore, the metric in such a foliation takes on a particular form~\cite[Theorem 1.1]{Bernal:2004gm}, 
\begin{equation}
\label{eq:foliation}
g = - N^2 \total t^2 + h_{ij} \total x^i \total x^j \eqend{,}
\end{equation}
where $\Sigma$ is a smooth 3-manifold, $t \colon \mathbb{R} \times \Sigma\mapsto \mathbb{R}$ is the natural projection on the first factor, $N \colon \mathbb{R} \times \Sigma \mapsto (0,\infty)$ (the lapse) is a smooth function, and $h$ (the spatial metric) is a symmetric tensor field on $\mathbb{R} \times \Sigma$. Furthermore, each hypersurface $\Sigma_t = \{t\} \times \Sigma \subset \man$ at constant $t$ is a Cauchy surface, and the restriction $h(t)$ of $h$ to $\Sigma_t$ is a Riemannian metric (which in particular implies that $\Sigma_t$ is spacelike). For such manifolds, the Klein--Gordon equation $\left( \nabla^2 - V \right) \phi = 0$ with an external potential $V = V(t,x)$ takes the form
\begin{equation}
\Bigl[ \partial_t^2 + f(t,x) \partial_t + w^2 \Bigr] \phi = 0 \eqend{,}
\end{equation}
where $f(t,x) = - \partial_t \ln N + \frac{1}{2} \partial_t \ln \det h$ and where the spatial part of the Klein-Gordon equation $w^2$ is given by the operator
\begin{equation}
\label{op}
\begin{split}
w^2 &= - \frac{N}{\sqrt{\det h}} \partial_i \Bigl( \sqrt{\det h} N h^{ij} \partial_j \Bigr) + N^2 \, V \\
&= - N^2 \Bigl( \Delta_h - V \Bigr) - N h^{ij} \partial_i N \partial_j \eqend{,}
\end{split}
\end{equation}
where $\Delta_h$ is the Laplace--Beltrami operator associated to the spatial metric $h$. According to~\cite[Remark 3.1]{Much:2018ikx}, $- N^2 \Delta_h$ is a symmetric, densely defined and  positive operator on the weighted Hibert space $L^2(\Sigma,\tilde{\mu})$ with measure $\total \tilde{\mu} = N^{-1} \sqrt{\det h} \total^3 x$ and scalar product
\begin{equation}
\langle \phi, \psi \rangle_{L^2(\Sigma,\tilde{\mu})} = \int_\Sigma \overline{\phi(x)} \psi(x) N^{-1} \sqrt{\det h} \total^3 x \eqend{.}
\end{equation}
Taking into account that $- N^2 \Delta_h \geq 0$, \cite[Theorem 4.1]{Much:2018ikx} then shows that $w^2$ is essentially self-adjoint on the dense subspace $C_0^\infty(\Sigma) \subset L^2(\Sigma,\tilde{\mu})$ if the Riemannian manifold $(\Sigma, \tilde{h} = N^{-2} h)$ is geodesically complete for each fixed $t$, and if the rescaled potential $N^2 V$ is semi-bounded from below and locally $L^2$ for each fixed $t$.

A usual choice for the potential is $V = m^2 + \xi R$, where $m \geq 0$ is the mass of the scalar field, $\xi \in \mathbb{R}$ and $R$ the Ricci scalar of the spacetime. $V$ is then semi-bounded from below if either $\xi \geq 0$ and $R \geq -c$ or $\xi \leq 0$ and $R \leq c$ for some constant $c > 0$. In particular, this always includes the cases of minimal coupling $\xi = 0$ or vacuum spacetimes with $R = 0$. In the following, we assume that the conditions on $V$ are satisfied and investigate the geodesic completeness of $(\Sigma, \tilde{h})$ for various classes of spacetimes. By~\cite[Theorem 4.1]{Much:2018ikx}, this then implies in each case the essential self-adjointness of $w^2$, the spatial part of the Klein--Gordon operator $\nabla^2 - V$. In particular, we treat wormhole spacetimes in Sec.~\ref{sec:wormhole}, static, spherically symmetric spacetimes in Sec.~\ref{sec:spherical} and nonstatic spacetimes in Sec.~\ref{sec:nonstatic}.

Excluded from our analysis are general sliced spaces, which have the form~\eqref{eq:foliation} but with $\total x^i$ replaced by $\total x^i - N^i \total t$, where $N^i$ (the shift vector) is a smooth function. Conditions for the global hyperbolicity of sliced spaces were given in Refs.~\cite{Choquet-Bruhat:2002nwg,Cotsakis:2003tw}, and we also refer the reader to the review~\cite{Finster:2021pvx} for other recent advances on global hyperbolicity.\footnote{We note that a conjecture made in~\cite{Finster:2021pvx} was disproved in~\cite{Sanchez:2021ciy}.} Since the analogue of~\cite[Theorem 4.1]{Much:2018ikx} is only known for stationary spacetimes~\cite{Finster:2019deg} but not in general (to the best of our knowledge), we leave the study of essential self-adjointness of the spatial part of the Klein--Gordon operator in sliced spaces for future work.

\section{Theorems on global hyperbolicity}
\label{sec:global}

We first collect various existing theorems regarding global hyperbolicity. Since global hyperbolicity is determined by the causal structure, we have
\begin{proposition}[{\cite{Geroch:1970uw}}]
\label{prop:conformal}
Let $(\man,g)$ be a globally hyperbolic spacetime and $\Omega\colon \man \to (0,\infty)$ a smooth conformal factor. Then $(\man, \Omega g)$ is globally hyperbolic.
\end{proposition}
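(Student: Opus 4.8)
The plan is to use the well-known characterization of global hyperbolicity via causal diamonds together with the conformal invariance of the causal structure. The key observation is that a conformal rescaling $g \mapsto \Omega g$ with $\Omega > 0$ does not change which vectors are timelike, null, or spacelike, hence it preserves the set of causal (and timelike) curves. Consequently the causal and chronological relations, and in particular the sets $J^\pm(p)$ and $I^\pm(p)$, are identical for $(\man, g)$ and $(\man, \Omega g)$.

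First I would recall the definition: a spacetime is globally hyperbolic if it is causal (no closed causal curves) and all causal diamonds $J^-(p) \cap J^+(q)$ are compact. Since $(\man, g)$ and $(\man, \Omega g)$ share the same causal curves, $(\man, \Omega g)$ has no closed causal curves either, so it is causal. Likewise, for every pair $p, q \in \man$, the set $J^-(p) \cap J^+(q)$ computed with respect to $\Omega g$ coincides as a subset of $\man$ with the one computed with respect to $g$, and is therefore compact. Both defining conditions transfer verbatim, so $(\man, \Omega g)$ is globally hyperbolic.

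I would also remark that the underlying smooth manifold, its topology, Hausdorffness, second countability, connectedness are unchanged, and that time-orientation is preserved (a continuous choice of future cone for $g$ is simultaneously a choice of future cone for $\Omega g$ since the cones coincide). The only mild regularity point to note is that $\Omega$ is assumed smooth and positive, so $\Omega g$ is again a $C^p$ Lorentzian metric and the framework of the introduction applies; this is why the hypothesis $\Omega \colon \man \to (0,\infty)$ smooth is included. The main (and essentially only) step requiring care is the lemma that the causal structure is conformally invariant, i.e. that a vector $X$ satisfies $g(X,X) \lessgtr 0$ if and only if $(\Omega g)(X,X) \lessgtr 0$; this is immediate from $(\Omega g)(X,X) = \Omega \, g(X,X)$ and $\Omega > 0$. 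There is no real obstacle here — the statement is classical and the proof is a short citation-style argument — so I would keep the write-up to a single short paragraph, emphasizing only that global hyperbolicity depends on the metric solely through its conformal class.
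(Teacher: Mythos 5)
Your proof is correct: the pointwise identity $(\Omega g)(X,X)=\Omega\,g(X,X)$ with $\Omega>0$ shows the causal structure, hence the sets $J^{\pm}$ and the absence of closed causal curves, is unchanged, so both defining conditions of global hyperbolicity transfer verbatim. The paper offers no proof of its own here --- it simply cites Geroch --- and your argument is exactly the standard one behind that citation, so nothing further is needed.
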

Many explicit spacetimes are given by warped products, for which we have
\begin{proposition}[{\cite[Thm.~3.66]{beem}, \cite[Lemma A.5.14]{baerginouxpfaeffle2007}}]
\label{prop:warpedproduct1}
Consider the warped product manifold $\man = (a,b) \times H$, $- \infty \leq a < b \leq \infty$, with metric $g = - \total t^2 + f h$, where $f \colon (a,b) \to (0,\infty)$ is a smooth positive function. Then $(\man,g)$ is globally hyperbolic if and only if $(H,h)$ is a complete Riemannian manifold.
\end{proposition}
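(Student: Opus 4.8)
The plan is to reduce the general warped product to the static case $f \equiv 1$ using the conformal invariance of global hyperbolicity (Proposition~\ref{prop:conformal}), and then to prove each implication for the static metric: in one direction by exhibiting a constant-time slice as a Cauchy surface, and in the other by a Hopf--Rinow argument applied to suitable causal diamonds. For the reduction, write $g = -\total t^2 + f(t)h = f(t)\bigl[-f(t)^{-1}\total t^2 + h\bigr]$ and pass to the new time coordinate $\tau := \int_{t_0}^{t} f(s)^{-1/2}\,\total s$ for a fixed $t_0 \in (a,b)$; since $\total\tau/\total t = f^{-1/2} > 0$, this is a diffeomorphism of $(a,b)$ onto an open interval $(\alpha,\beta) \subseteq \mathbb{R}$, under which $g$ becomes $\Omega\bigl(-\total\tau^2 + h\bigr)$ on $(\alpha,\beta)\times H$, with $\Omega$ a smooth positive function of $\tau$. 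By Proposition~\ref{prop:conformal}, applied with the factors $\Omega$ and $\Omega^{-1}$, $(\man,g)$ is globally hyperbolic if and only if $\bigl((\alpha,\beta)\times H,\,-\total\tau^2 + h\bigr)$ is. Accordingly we now take $\man = (\alpha,\beta)\times H$ with $g = -\total\tau^2 + h$, where the interval need not be all of $\mathbb{R}$, though this will cause no difficulty.

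For sufficiency, assume $(H,h)$ is complete. Since $g(\nabla\tau,\nabla\tau) = -1$, the coordinate $\tau$ is a smooth time function, so $\man$ is causal and $\tau$ increases strictly along every future-directed causal curve; in particular $\Sigma_0 := \{0\}\times H$ is closed and no timelike curve meets it twice. It remains to check that every inextendible causal curve meets $\Sigma_0$, for then $D(\Sigma_0) = \man$, so $\Sigma_0$ is a Cauchy surface and global hyperbolicity follows from Geroch's theorem~\cite{Geroch:1970uw}. Reparametrizing such a curve by $\tau$, write it as $\gamma(\tau) = (\tau, c(\tau))$ on its maximal $\tau$-interval $(\tau_-,\tau_+)$; the causal condition $-1 + h(\dot c,\dot c) \le 0$ bounds the $h$-speed of $c$ by $1$, so $d_h(c(\tau),c(\tau')) \le |\tau-\tau'|$ and $c$ is $d_h$-Cauchy as $\tau \to \tau_+^-$, hence convergent to some $c_\ast \in H$ by completeness. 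If $\tau_+ < \beta$ this yields $\gamma(\tau) \to (\tau_+,c_\ast) \in \man$, contradicting future-inextendibility; hence $\tau_+ = \beta$, and symmetrically $\tau_- = \alpha$, so the $\tau$-range of $\gamma$ is all of $(\alpha,\beta)$ and in particular contains $0$.

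For necessity, assume $(\man,g)$ is globally hyperbolic. The key elementary observation is that if $p = (\tau_1,x_1)$ and $q = (\tau_2,x_2)$ with $\tau_1 < \tau_2$ and $d_h(x_1,x_2) < \tau_2 - \tau_1$, then $q \in J^+(p)$: pick a path in $H$ from $x_1$ to $x_2$ of $h$-length $\ell$ with $d_h(x_1,x_2) \le \ell < \tau_2-\tau_1$, parametrize it by $h$-arclength, let $\tau$ run from $\tau_1$ at unit rate so that $x_2$ is reached at $\tau = \tau_1 + \ell < \tau_2$, and complete with a vertical segment; the resulting curve is causal. Fixing $\tau_1 < 0 < \tau_2$ in $(\alpha,\beta)$ and putting $\rho := \min(-\tau_1,\tau_2) > 0$, this shows $\{0\}\times B_h(x,\rho) \subseteq J^+\bigl((\tau_1,x)\bigr) \cap J^-\bigl((\tau_2,x)\bigr)$ for every $x \in H$, with $\rho$ independent of $x$. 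Global hyperbolicity makes the diamond on the right-hand side compact, hence $B_h(x,\rho)$ lies in the compact $H$-projection of that diamond and so $\overline{B_h(x,\rho)}$ is compact, for every $x \in H$ and the same $\rho$. Consequently any $d_h$-Cauchy sequence is eventually contained in one such ball, so it has a convergent subsequence and therefore converges; thus $(H,d_h)$ is complete, and by the Hopf--Rinow theorem $(H,h)$ is geodesically complete.

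The main obstacle is the careful translation between causal curves in $\man$ and finite-speed curves in the metric space $(H,d_h)$. In the sufficiency direction one must carry this out for general, possibly merely continuous, inextendible causal curves and invoke completeness precisely where finite $h$-length is upgraded to convergence; in the necessity direction one must take care \emph{not} to assume that minimizing geodesics exist in $H$ (they need not, before completeness is known), so only near-minimizing paths are available when lifting into the diamond. The conformal reduction at the outset is what removes the function $f$ from all of these estimates and renders the possibly bounded time interval $(\alpha,\beta)$ harmless.
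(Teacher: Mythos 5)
Your proof is correct. Note, however, that the paper does not prove this proposition at all: it is imported by citation from \cite{beem} and \cite{baerginouxpfaeffle2007}, so there is no internal proof to compare against. What you supply is a complete, self-contained argument along the standard lines of those references: first the conformal reduction $g = f\left(-f^{-1}\total t^2 + h\right)$ together with the reparametrization $\tau = \int f^{-1/2}\total t$ and Prop.~\ref{prop:conformal} to eliminate the warping function; then, for sufficiency, the $1$-Lipschitz bound on the spatial projection of a $\tau$-parametrized causal curve plus completeness of $(H,d_h)$ to show each slice $\{\tau_0\}\times H$ is met by every inextendible causal curve, giving a Cauchy surface and hence global hyperbolicity via \cite{Geroch:1970uw}; and, for necessity, the inclusion of uniform metric balls $\{0\}\times B_h(x,\rho)$ in causal diamonds, whose compactness (projected to $H$) yields completeness of $(H,d_h)$ and then geodesic completeness by Hopf--Rinow. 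Both implications are handled correctly, and you rightly avoid assuming minimizing geodesics exist in $H$ before completeness is established, working instead with near-minimizing paths; the remaining technicalities you flag (locally Lipschitz causal curves differentiable almost everywhere, a converging causal curve acquiring an endpoint and hence being extendible) are standard and do not affect the argument. The only practical difference from the paper is therefore one of economy: the paper outsources the proof, while your version makes the logical dependence on completeness of $(H,h)$ explicit and would let the paper be read without consulting \cite{beem} or \cite{baerginouxpfaeffle2007}.
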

This covers the case where the base manifold is one-dimensional. For a higher-dimensional base, we have
\begin{proposition}[{\cite[Thm.~3.68]{beem}}]
\label{prop:warpedproduct2}
Consider the warped product manifold $\man = M \times H$ with metric $g_\man = g + f h$, where $f \colon M \to (0,\infty)$ is a smooth positive function. Then $(\man,g_\man)$ is globally hyperbolic if and only if $(H,h)$ is a complete Riemannian manifold and $(M,g)$ is a globally hyperbolic manifold.
\end{proposition}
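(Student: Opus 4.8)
The plan is to reduce the statement to the behaviour of causal curves under the canonical projection $\pi_M\colon\man\to M$ and the canonical inclusions $\iota_y\colon M\to\man$, $x\mapsto(x,y)$ (for a fixed $y\in H$). Pointwise, for a curve $\gamma=(\alpha,\beta)$ in $\man=M\times_f H$ one has $g_{\man}(\gamma',\gamma')=g(\alpha',\alpha')+f(\alpha)\,h(\beta',\beta')$; as $f>0$ and $h(\beta',\beta')\geq0$, a causal $\gamma$ forces $\alpha'\neq0$ and $g(\alpha',\alpha')\leq0$, so $\alpha=\pi_M\circ\gamma$ is a causal curve in $M$, timelike whenever $\gamma$ is. Moreover $h(\beta',\beta')\leq-g(\alpha',\alpha')/f(\alpha)$, so if $\alpha$ stays in a compact set $K\subset M$ the fibre projection obeys the length bound $L_h(\beta)\leq L_g(\alpha)/\sqrt{\min_K f}$, with $L_g$ the Lorentzian arc length. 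Conversely each $\iota_y$ is an isometric embedding onto the closed, totally geodesic slice $M\times\{y\}$, hence carries causal curves of $M$ to causal curves of $\man$.

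For the ``if'' direction, assume $(M,g)$ globally hyperbolic and $(H,h)$ complete. Causality of $\man$ is immediate, since a closed causal curve would project to a closed causal curve in $M$. For the compact causal diamonds I would show that $S\times H$ is a Cauchy surface of $\man$ whenever $S\subset M$ is a smooth spacelike Cauchy surface of $M$ (which exists by~\cite{Bernal:2003jb}). It is closed, and if $\gamma=(\alpha,\beta)$ is an endpointless causal curve in $\man$ then $\alpha$ is endpointless in $M$: were $\alpha(s)\to p_0$, choose a compact neighbourhood $U$ of $p_0$ containing a tail of $\alpha$; then the length bound, together with $L_g(\alpha\vert_{[s_1,s_2]})\leq d_M(\alpha(s_1),\alpha(s_2))\to d_M(p_0,p_0)=0$ (the Lorentzian distance $d_M$ being finite and continuous on the globally hyperbolic $M$), forces $\beta$ to be $d_h$-Cauchy, hence convergent by completeness of $H$, so $\gamma$ would have an endpoint. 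Since $S$ is Cauchy, $\alpha$ meets $S$ exactly once, so $\gamma$ meets $S\times H$ exactly once; thus $\man$ admits a Cauchy surface and is globally hyperbolic by~\cite{Geroch:1970uw}. (Alternatively one checks directly that $J^+_{\man}(P)\cap J^-_{\man}(Q)\subseteq\bigl(J^+_M(p_M)\cap J^-_M(q_M)\bigr)\times\overline{B}_h(p_H,C)$, the radius $C$ being finite by compactness of the base diamond and the second factor compact by Hopf--Rinow; but one then still has to show that the diamond is closed, which is why the Cauchy-surface route is cleaner.)

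For the ``only if'' direction, assume $\man$ globally hyperbolic. That $(M,g)$ is globally hyperbolic follows from the two observations above: $M$ is causal because a closed causal curve in $M$ lifts to one in $\man$, and combining lifting with projecting yields $\iota_{y_0}\bigl(J^+_M(p)\cap J^-_M(q)\bigr)=\bigl(J^+_{\man}(\iota_{y_0}p)\cap J^-_{\man}(\iota_{y_0}q)\bigr)\cap\bigl(M\times\{y_0\}\bigr)$; the right-hand side is compact, being the intersection of a compact causal diamond of $\man$ with the closed slice, and $\iota_{y_0}$ is a homeomorphism onto its image, so $J^+_M(p)\cap J^-_M(q)$ is compact. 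For completeness of $(H,h)$, suppose it fails; then there is an inextendible unit-speed geodesic $\beta\colon[0,\ell)\to H$ with $\ell<\infty$, and such a geodesic eventually leaves every compact subset of $H$. Using the now-available splitting $M\cong\mathbb{R}\times S$, $g=-N^2\,\total t^2+h_t$ of the base, I would ride $\beta$ near its singular end against a coordinate-time curve $\alpha(s)=(\tau(s),x_0)$ with $\dot\tau=\sqrt{f(\tau,x_0)}\,/\,N(\tau,x_0)$, so that $\gamma=(\alpha,\beta)$ is null. Fixing the turning parameter $s_0\in(0,\ell)$ close enough to $\ell$ that the available parameter length $2(\ell-s_0)$ is tiny, $\tau$ increases by an arbitrarily small amount, so that for $s\in(s_0,\ell)$, running $\gamma$ forward to the fibre point $\beta(s)$ and then back to $\beta(s_0)$ produces a causal curve from the fixed point $P:=(\alpha(s_0),\beta(s_0))$ through a point $z_s$ whose $M$-component stays in a fixed compact set but whose $H$-component $\beta(s)$ escapes every compact subset of $H$ as $s\uparrow\ell$. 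Appending to each a short timelike segment ending at a common future point $Q$, all the $z_s$ lie in the single causal diamond $J^+_{\man}(P)\cap J^-_{\man}(Q)$, which would then be compact — contradicting that $\{z_s\}$ has no convergent subsequence.

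The main obstacle is precisely this last construction. The subtlety is that a causal base curve carrying a prescribed, even arbitrarily small, amount of Lorentzian length can nonetheless wander arbitrarily far in $M$ — ``almost null'' timelike curves are cheap — so one cannot just invoke ``some causal curve with enough proper time''. The role of the splitting $g=-N^2\,\total t^2+h_t$ is to supply genuinely non-wandering timelike base curves (the coordinate lines $x=\mathrm{const}$), with Lorentzian length element controlled by $N$; together with localising the whole argument near the singular end of $\beta$, this confines the base curve to a fixed compact set and makes the contradiction go through. The remaining ingredients — the pointwise identity, the length estimate, and the verification that $S\times H$ is Cauchy — are then routine.
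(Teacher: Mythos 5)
The paper does not prove this proposition at all --- it is quoted verbatim from the reference \cite{beem} (Thm.~3.68) --- so your argument should be measured against the standard literature proof, which it essentially reconstructs, and to my reading it is correct. The ``if'' direction (project causal curves to the base, use the fibre length bound together with finiteness and continuity of the Lorentzian distance on the globally hyperbolic base to show that the projection of an inextendible causal curve is inextendible, then exhibit $S\times H$ as a Cauchy surface and invoke Geroch) is the standard route; you correctly need $S$ spacelike/acausal, which your appeal to Bernal--S{\'a}nchez provides, so that timelike curves meet $S\times H$ at most once. The ``only if'' direction via the slice identity for the base and a null out-and-back curve along an incomplete unit-speed geodesic of the fibre is also the standard device; the step ``a finite-length inextendible geodesic eventually leaves every compact subset of $H$'' deserves a one-line justification (escape lemma for the geodesic flow on the unit tangent bundle), but is true. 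The one place where your proof is heavier than it needs to be is the reliance on the orthogonal splitting $g=-N^{2}\total t^{2}+h_t$ of the base, which you present as the key to overcoming the ``wandering almost-null curves'' obstacle: it is dispensable. Since you are free to reparametrize, fix once and for all a future-directed timelike segment $\alpha\colon[0,\epsilon]\to M$ through an arbitrary point, parametrized by proper time, and pair it with the incomplete geodesic $\beta$ traversed at $h$-speed $1/\sqrt{f(\alpha)}\geq 1/\sqrt{\max_{\alpha} f}>0$; choosing $s_0$ with $\ell-s_0<\epsilon/\bigl(2\sqrt{\max_{\alpha} f}\bigr)$, the same out-and-back null curve fits over this single fixed segment, so the base component automatically stays in the compact set $\alpha([0,\epsilon])$ and no splitting theorem is needed. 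The remaining deferred details (short-time solvability of the ODE for $\tau$, the slice-diamond identity, closedness of $M\times\{y_0\}$) are indeed routine.
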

For a multiply warped product, we have an analogous result:
\begin{proposition}[{\cite[Thm.~3.3]{unal2000}}]
\label{prop:multiplywarpedproduct}
Consider the multiply warped product manifold $\man = M \times F_1 \times \cdots \times F_m$ with metric $g_\man = g + f_1 g_{F_1} + \cdots + f_m g_{F_m}$, where $f_i \colon M \to (0,\infty)$, $i \in \{1,\ldots,s\}$ are all smooth positive functions. Then $(\man,g_\man)$ is globally hyperbolic if and only if $(M,g)$ is globally hyperbolic and $(F_i,g_{F_i})$ is a complete Riemannian manifold for all $i \in \{1,\ldots,s\}$.
\end{proposition}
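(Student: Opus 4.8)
The plan is to prove the equivalence by induction on the number $m$ of warped fibers, using the singly warped product result of Proposition~\ref{prop:warpedproduct2} both as the base case and as the engine of the inductive step. The point that makes this work is that in Proposition~\ref{prop:warpedproduct2} the warping function is permitted to depend on the \emph{entire} base, and this admissibility is stable under pullback along a smooth projection.

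The base case $m=1$ is exactly Proposition~\ref{prop:warpedproduct2}. For the inductive step I would assume the statement for $m-1$ fibers and, given $\man = M \times F_1 \times \cdots \times F_m$ with $g_\man = g + \sum_{i=1}^m f_i g_{F_i}$, regroup the first $m$ factors by setting $\man' = M \times F_1 \times \cdots \times F_{m-1}$ with the metric $g' = g + \sum_{i=1}^{m-1} f_i g_{F_i}$. Since the original metric is block diagonal with respect to all of the factors $M, F_1, \ldots, F_m$, writing $\pi \colon \man' \to M$ for the canonical projection we get $g_\man = g' + (f_m \circ \pi)\, g_{F_m}$, which displays $(\man, g_\man)$ as a \emph{singly} warped product over the base $(\man', g')$ with fiber $(F_m, g_{F_m})$ and warping function $f_m \circ \pi \colon \man' \to (0,\infty)$. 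Proposition~\ref{prop:warpedproduct2} then gives that $(\man, g_\man)$ is globally hyperbolic if and only if $(\man', g')$ is globally hyperbolic and $(F_m, g_{F_m})$ is complete; the inductive hypothesis, applied to the multiply warped product $(\man', g')$ over $M$ with fibers $F_1, \ldots, F_{m-1}$, rewrites the first condition as: $(M,g)$ globally hyperbolic and $(F_i, g_{F_i})$ complete for $i = 1, \ldots, m-1$. Concatenating the two equivalences finishes the induction.

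The main thing to check — though it is routine rather than hard — is that this reinterpretation is admissible: that $g'$ is genuinely Lorentzian on $\man'$ (it is, as $g$ supplies the single timelike direction and each $f_i g_{F_i}$ is positive definite), that the product $\man = \man' \times F_m$ carries no cross terms (immediate from block diagonality), and that the pulled-back warping function $f_m \circ \pi$ is smooth and positive (clear, since $f_m$ is smooth and positive and $\pi$ is smooth), so that the hypotheses of Proposition~\ref{prop:warpedproduct2} are indeed met at each stage. As an alternative to the inductive argument one could argue directly: for the nontrivial direction, if $S_M$ is a smooth Cauchy surface of $M$ then $S_M \times F_1 \times \cdots \times F_m$ should be a Cauchy surface of $\man$, the key estimate being that a causal curve of $\man$ projects to a causal curve $\alpha$ in $M$ whose Lorentzian length controls the Riemannian length of each fiber component $\beta_i$ (because $f_i(\alpha)\, g_{F_i}(\beta_i', \beta_i') \le -g(\alpha', \alpha')$ along a causal curve), so that geodesic completeness of the $F_i$ forces an inextendible causal curve of $\man$ to project to an inextendible causal curve of $M$, which must then meet $S_M$.
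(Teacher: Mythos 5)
Your argument is correct. Note, however, that the paper does not prove this proposition at all: it is quoted verbatim from {\"U}nal (Thm.~3.3 of the cited reference), whose original argument is the direct one you sketch at the end --- projecting causal curves to the base, bounding the fiber components' Riemannian arc length by the Lorentzian length of the projection, and concluding that $S_M \times F_1 \times \cdots \times F_m$ is a Cauchy surface; this is the multiply warped analogue of the Beem--Ehrlich proof of Prop.~\ref{prop:warpedproduct2}. Your main route is genuinely different and, in the context of this paper, more economical: you bootstrap the multiply warped case from the singly warped case by induction, regrouping $\man = \bigl( M \times F_1 \times \cdots \times F_{m-1} \bigr) \times F_m$ and observing that $g_\man = g' + (f_m \circ \pi)\, g_{F_m}$ is again a singly warped product because Prop.~\ref{prop:warpedproduct2} permits the warping function to live on the whole (Lorentzian, globally hyperbolic) base --- that is exactly the point that needs to be made explicit, and you make it, together with the routine checks that $g'$ is Lorentzian, time-orientable, and block diagonal. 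Since Prop.~\ref{prop:warpedproduct2} is an equivalence, chaining it with the inductive hypothesis yields both directions at once, so nothing further is needed. What the inductive approach buys is a proof requiring no new causal-theoretic estimates beyond the quoted singly warped theorem; what the direct approach (the paper's source) buys is independence from that theorem and an explicit description of the Cauchy surfaces of $\man$. One cosmetic remark: the index range ``$i \in \{1,\ldots,s\}$'' in the statement is a typo for $\{1,\ldots,m\}$, which your proof implicitly corrects.
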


To effectively apply these propositions, we have to know also necessary and sufficient conditions for a Riemannian manifold to be complete.
\begin{lemma}[{\cite[Lemma~7.2]{bishoponeill1969}}]
\label{lemma:prod}
Consider the warped product manifold $M \times H$ with metric $g + f h$, where $f \colon M \to (0,\infty)$ is a smooth positive function. Then $(M \times H,g + f h)$ is complete if and only if both $(M,g)$ and $(H,h)$ are complete Riemannian manifolds.
\end{lemma}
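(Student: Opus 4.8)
The plan is to work throughout with the equivalence, via the Hopf--Rinow theorem, between geodesic and metric (Cauchy) completeness, and to control the geodesics of $\man = M \times H$ by their conserved quantities. Two elementary metric estimates handle the easy direction: a curve $s \mapsto (\beta(s), q)$ lying in a slice $M \times \{q\}$ has the same $\man$-length as the curve $\beta$ in $(M,g)$, so $d_\man\bigl((p_1,q),(p_2,q)\bigr) \le d_g(p_1,p_2)$; and a curve $s \mapsto (p,\gamma(s))$ lying in a fibre $\{p\} \times H$ has $\man$-length $\sqrt{f(p)}\,\ell_h(\gamma)$, so $d_\man\bigl((p,q_1),(p,q_2)\bigr) \le \sqrt{f(p)}\,d_h(q_1,q_2)$.

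Granting these, suppose $(\man, g + fh)$ is complete. Given a $d_g$-Cauchy sequence $(p_n)$ in $M$ and a fixed $q \in H$, the first estimate shows $\bigl((p_n,q)\bigr)_n$ is $d_\man$-Cauchy, hence convergent; applying the continuous projection $\pi_M \colon \man \to M$, the sequence $(p_n)$ converges in the manifold topology of $M$, and being $d_g$-Cauchy it therefore converges in $(M,d_g)$. Thus $M$ is metrically, hence geodesically, complete; the same argument with the second estimate and $\pi_H$ shows $H$ is complete. This is also the step where strict positivity and continuity of $f$ are used: if $f$ were allowed to vanish, a fibre could degenerate and $H$ need not be complete.

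For the converse, assume $M$ and $H$ are complete, let $\sigma = (\beta,\gamma) \colon [0,s_0) \to \man$ be a maximal unit-speed geodesic, and suppose for contradiction that $s_0 < \infty$. Contracting the $\gamma$-component of the geodesic equation, $\frac{\mathrm{d}}{\mathrm{d}s}\bigl(f(\beta) h_{ij}\dot\gamma^j\bigr) = \tfrac12 f(\beta)\,\partial_i h_{jk}\,\dot\gamma^j \dot\gamma^k$, with $\dot\gamma^i$ and simplifying yields that $f(\beta)^2 \lvert\dot\gamma\rvert_h^2 \equiv C$ is constant along $\sigma$; moreover $\lvert\dot\beta\rvert_g^2 + f(\beta)\lvert\dot\gamma\rvert_h^2 \equiv 1$. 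If $C = 0$ then $\dot\gamma \equiv 0$, so $\sigma$ stays in a slice and $\beta$ is a geodesic of the complete manifold $M$, which extends past $s_0$ --- a contradiction. If $C > 0$, then $\lvert\dot\beta\rvert_g^2 = 1 - C/f(\beta) \le 1$, so $\beta$ has $g$-length at most $s_0$ on $[0,s_0)$ and, $M$ being complete, its image lies in the compact closed ball $\bar B_{d_g}(\beta(0), s_0) =: K$; there $f \ge m > 0$, hence $\lvert\dot\gamma\rvert_h = \sqrt{C}/f(\beta) \le \sqrt{C}/m$, so $\gamma$ has finite $h$-length and, $H$ being complete, its image lies in a compact set $K' \subset H$. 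Then $\sigma([0,s_0)) \subset K \times K'$ and, since $\lvert\dot\sigma\rvert_{g+fh} \equiv 1$, the pair $(\sigma,\dot\sigma)$ remains in the compact subset $\{(x,v) \in T\man : x \in K \times K',\ \lvert v\rvert_{g+fh} = 1\}$; by the standard fact that a geodesic confined to a compact set extends to a larger parameter interval, $\sigma$ extends past $s_0$ --- again a contradiction. Hence every geodesic is complete, so $(\man, g+fh)$ is complete.

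The only genuine computation is the conserved quantity $f(\beta)^2\lvert\dot\gamma\rvert_h^2 = \text{const}$, the warped-product analogue of conservation of angular momentum; the rest is bookkeeping with Hopf--Rinow, the two metric estimates, and the geodesic escape lemma. I expect the main obstacle to be the ``if'' direction: one must see that a \emph{finite} maximal parameter interval forces $\sigma$ into a compact set in \emph{each} factor separately, and the case split on whether $C$ vanishes is exactly what makes this transparent --- when $C = 0$ the motion is purely along $M$, and when $C > 0$ the bound $f(\beta) \ge C$ keeps the fibre speed controlled.
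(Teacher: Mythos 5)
Your proof is correct. Note that the paper does not prove this lemma itself --- it is quoted from Bishop--O'Neill (Lemma~7.2 of their 1969 paper) --- so there is no in-paper argument to compare against; what you supply is essentially the classical self-contained proof via Hopf--Rinow. Your key first integral checks out: writing $P_i = f(\beta)\,h_{ij}\dot\gamma^j$, one has $f(\beta)^2\lvert\dot\gamma\rvert_h^2 = h^{ij}(\gamma)P_iP_j$, and the $\gamma$-equation together with $\partial_k h^{ij} = -h^{ia}\,\partial_k h_{ab}\,h^{bj}$ gives $\frac{\mathrm{d}}{\mathrm{d}s}\bigl(h^{ij}P_iP_j\bigr)=0$; this is the warped-product analogue of O'Neill's conserved fibre quantity, adapted to the convention in which the warping function multiplies $h$ linearly rather than as a square. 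Two points you leave implicit are harmless but worth stating: in the $C=0$ case, extending $\beta$ as an $M$-geodesic really does extend $\sigma$ as a geodesic of $M\times H$, because with $\gamma$ constant both components of the geodesic system are satisfied (the slices $M\times\{q\}$ are totally geodesic); and in the $C>0$ case you could bypass the compactness of $K$ and the bound $f\ge m$ altogether, since unit speed forces $1 - C/f(\beta)\ge 0$, i.e.\ $f(\beta)\ge C$, whence $\lvert\dot\gamma\rvert_h = \sqrt{C}/f(\beta)\le 1/\sqrt{C}$ directly. The Cauchy-sequence argument for the ``only if'' direction, using the two slice/fibre distance estimates and the continuity of the projections, is also sound and is close in spirit to the original distance-comparison proof.
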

Also doubly warped products have been studied, and for them we have
\begin{proposition}[{\cite[Prop.~3.1]{unal2001}}]
\label{prop:doublywarped}
Consider the doubly warped product manifold $B \times F$ with metric $g = f g_B + b g_F$, where $(B, g_B)$ and $(F, g_F)$ are complete Riemannian manifolds and $b \colon B \to (0,\infty)$, $f \colon F \to (0,\infty)$ are smooth functions. If $\inf b > 0$ or $\inf f > 0$, then $(B \times F, g)$ is complete.
\end{proposition}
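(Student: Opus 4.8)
The plan is to reduce the claim, via the Hopf--Rinow theorem, to \emph{metric} completeness: it suffices to show that every Cauchy sequence in $(B \times F, g)$ converges. One first notes that $g = f g_B + b g_F$ is a genuine Riemannian metric, since at $(p,q)$ and for $(u,v) \in T_p B \oplus T_q F$ one has $g_{(p,q)}((u,v),(u,v)) = f(q)\, g_B(u,u) + b(p)\, g_F(v,v) \geq 0$, with equality only for $(u,v)=0$ because $f,b>0$. By the symmetry between the two factors it is enough to treat the case $\inf_F f =: c_1 > 0$; the case $\inf_B b > 0$ then follows by interchanging the roles of $B$ and $F$.

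The first step is to control the $B$-component. For any piecewise-smooth curve $\gamma = (\alpha,\beta) \colon [0,1] \to B \times F$ one has $f(\beta(t)) \geq c_1$ pointwise, hence
\begin{equation}
L_g(\gamma) = \int_0^1 \sqrt{f(\beta)\, \lvert \alpha' \rvert_{g_B}^2 + b(\alpha)\, \lvert \beta' \rvert_{g_F}^2}\, \total t \geq \sqrt{c_1}\, L_{g_B}(\alpha) \eqend{,}
\end{equation}
so $d_{g_B}(\alpha(0),\alpha(1)) \leq c_1^{-1/2}\, d_g(\gamma(0),\gamma(1))$. Thus if $(p_n,q_n)$ is Cauchy in $B \times F$, then $(p_n)$ is Cauchy in $(B,g_B)$, and by completeness of $B$ it converges to some $p \in B$.

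The second and main step is to control the $F$-component, which is delicate because $b(\alpha(t))$ may become small when the $B$-projection wanders. Since $B$ is complete one may fix $\delta>0$ so that the closed $g_B$-ball $K$ of radius $\delta$ about $p$ is compact; continuity and positivity of $b$ then give $b \geq c_2 > 0$ on $K$. For indices $n,m$ and $\varepsilon>0$, choose $\gamma=(\alpha,\beta)$ joining $(p_n,q_n)$ to $(p_m,q_m)$ with $L_g(\gamma) \leq d + \varepsilon$, where $d := d_g((p_n,q_n),(p_m,q_m))$; the previous estimate gives $L_{g_B}(\alpha) \leq c_1^{-1/2}(d+\varepsilon)$, so every point of $\alpha$ lies within $g_B$-distance $d_{g_B}(p_n,p) + c_1^{-1/2}(d+\varepsilon)$ of $p$. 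Since $(p_n,q_n)$ is Cauchy and $p_n \to p$, this is $<\delta$ once $n,m$ are large and $\varepsilon$ is small, so $\alpha$ stays in $K$, where $b(\alpha(t)) \geq c_2$; hence $L_g(\gamma) \geq \sqrt{c_2}\, L_{g_F}(\beta) \geq \sqrt{c_2}\, d_{g_F}(q_n,q_m)$, and letting $\varepsilon \to 0$ yields $d_{g_F}(q_n,q_m) \leq c_2^{-1/2}\, d$. Therefore $(q_n)$ is Cauchy in $(F,g_F)$ and converges to some $q \in F$. To conclude that $(p_n,q_n) \to (p,q)$ in $(B\times F,g)$, one joins $(p_n,q_n)$ to $(p,q)$ by concatenating $t \mapsto (\alpha_n(t),q_n)$, with $\alpha_n$ a $g_B$-minimizing geodesic from $p_n$ to $p$, and $s \mapsto (p,\beta_n(s))$, with $\beta_n$ a $g_F$-minimizing geodesic from $q_n$ to $q$; since $q_n,q$ eventually lie in a fixed compact set on which $f$ is bounded above and $b$ is bounded near $p$, the length of this curve is $O(d_{g_B}(p_n,p)) + O(d_{g_F}(q_n,q)) \to 0$, giving metric completeness.

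The hard part is the second step: one must prevent near-minimizing curves from dragging their $B$-projection into a region where $b$ degenerates. This is exactly what the hypothesis $\inf f > 0$ (or, symmetrically, $\inf b > 0$) provides — it forces curves of small total length to have $B$-projection of small length, hence confined to a compact set on which $b$ stays bounded below; without such a hypothesis the doubly warped product of two complete manifolds need not be complete, which is why the assumption cannot be dropped.
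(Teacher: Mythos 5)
Your argument is correct. Note first that the paper does not prove this proposition at all --- it is imported verbatim from {\"U}nal \cite[Prop.~3.1]{unal2001} --- so there is no in-paper proof to compare against; your proposal therefore serves as a self-contained justification of a quoted result. The structure is sound: reducing to metric completeness via Hopf--Rinow, the projection estimate $d_{g_B}(\pi_B(x),\pi_B(y)) \leq c_1^{-1/2} d_g(x,y)$ from $f \geq c_1 > 0$ is valid, and the key localization step --- using completeness of $(B,g_B)$ to get a compact ball $K$ around the limit $p$, the Cauchy property to force near-minimizing curves between late terms to have $B$-projection trapped in $K$, and hence $b \geq c_2 > 0$ along them --- correctly handles the only delicate point, namely that $b$ has no global positive lower bound. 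The final concatenation argument (with $f(q_n)$ bounded since $q_n$ eventually lies in a compact neighbourhood of $q$, and $b(p)$ a fixed constant along the second leg) does establish $d_g((p_n,q_n),(p,q)) \to 0$, and the symmetry reduction to the case $\inf f > 0$ is legitimate because the metric $f g_B + b g_F$ is symmetric under interchanging the roles of the two factors. The only implicit convention worth stating is connectedness of $B$ and $F$ (so that $d_{g_B}$, $d_{g_F}$ and Hopf--Rinow apply), which is standard in this setting; with that understood, your proof is complete and, as your closing remark correctly observes, the hypothesis $\inf b > 0$ or $\inf f > 0$ is genuinely needed, since with both warping functions decaying one can produce divergent curves of finite length.
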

In fact, even multiply warped products can be treated:
\begin{proposition}[{\cite[Thm.~4.14]{unal2000}}]
\label{prop:multiplywarped}
Consider the multiply warped product manifold $M = B \times F_1 \times \cdots \times F_m$ with metric $g = g_B + b_1 g_{F_1} + \cdots + b_m g_{F_m}$, where $b_i \colon B \to (0,\infty)$ for $i \in \{1,\ldots,m\}$ are smooth functions. If $(B, g_B)$ and $(F_i, g_{F_i})$ for $i \in \{1,\ldots,m\}$ are all complete Riemannian manifolds, then $(M,g)$ is a complete Riemannian manifold.
\end{proposition}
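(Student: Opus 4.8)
\emph{Proof proposal.} The plan is to proceed by induction on the number $m$ of warped fibres, writing the multiply warped product as an iterated ordinary warped product and invoking Lemma~\ref{lemma:prod} at each stage. For $k \in \{0,1,\ldots,m\}$ let $M_k = B \times F_1 \times \cdots \times F_k$ be equipped with the metric $g_k = g_B + \sum_{i=1}^k b_i g_{F_i}$, so that $(M_0,g_0) = (B,g_B)$ and $(M_m,g_m) = (M,g)$. Each $M_k$ comes with the canonical projection $\pi_k \colon M_k \to B$ onto the first factor, which is smooth, and $M_k = M_{k-1} \times F_k$ is an ordinary warped product with base $(M_{k-1},g_{k-1})$, fibre $(F_k,g_{F_k})$ and warping function $\beta_k := b_k \circ \pi_{k-1} \colon M_{k-1} \to (0,\infty)$; the latter is smooth and strictly positive since $b_k$ is and $\pi_{k-1}$ is smooth, and a direct computation gives $g_{k-1} + \beta_k\, g_{F_k} = g_k$.

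With this set-up the induction is immediate. The base case $k = 0$ is exactly the hypothesis that $(B,g_B)$ is complete. For the inductive step, suppose $(M_{k-1},g_{k-1})$ is complete; since $(F_k,g_{F_k})$ is complete by hypothesis and $\beta_k$ is a smooth positive function on the base, Lemma~\ref{lemma:prod} applied to the warped product $M_k = M_{k-1} \times F_k$ shows that $(M_k,g_k)$ is complete. After $m$ steps we obtain that $(M,g) = (M_m,g_m)$ is a complete Riemannian manifold.

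There is no real analytic obstacle here beyond Lemma~\ref{lemma:prod} itself; the only subtlety is bookkeeping. At the $k$-th stage the warping function $b_k$ is defined on $B$ rather than on the enlarged base $M_{k-1}$, so it has to be pulled back along $\pi_{k-1}$, and one must check that the iterated warped metrics $g_k$ indeed reproduce the multiply warped metric $g$ when $k = m$. Finally, since Lemma~\ref{lemma:prod} is an equivalence, the same chain of reductions also yields the converse implication --- completeness of $(M,g)$ forces $(B,g_B)$ and each $(F_i,g_{F_i})$ to be complete --- although only the direction stated in the proposition is needed here.
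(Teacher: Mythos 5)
Your argument is correct. Writing the multiply warped product as an iterated singly warped product, $M_k = M_{k-1} \times F_k$ with warping function $\beta_k = b_k \circ \pi_{k-1}$ pulled back to the enlarged base, is legitimate: $\beta_k$ is smooth and positive on $M_{k-1}$, the metrics satisfy $g_{k-1} + \beta_k\, g_{F_k} = g_k$, and Lemma~\ref{lemma:prod} applies verbatim at each stage, so the induction closes. Note, however, that the paper does not prove this proposition at all --- it is imported wholesale as Thm.~4.14 of \"Unal's paper on multiply warped products, whose own proof works directly with the multiply warped metric (length/Cauchy-sequence estimates using completeness of the base and positivity of the warping functions on compacta), rather than by reduction to the Bishop--O'Neill singly warped case. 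Your route therefore buys something the paper's citation does not: a short, self-contained derivation of Prop.~\ref{prop:multiplywarped} from Lemma~\ref{lemma:prod}, which is already part of the paper's toolkit, and as you observe it even yields the converse direction for free since Lemma~\ref{lemma:prod} is an equivalence; the cost is only the bookkeeping of pulling back each $b_i$ along the projection, which you handle explicitly. One could streamline the write-up slightly by absorbing the base case into the statement ``$(M_0,g_0)=(B,g_B)$ is complete by hypothesis,'' but there is no gap.
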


We can thus concentrate on the individual factors (which in our case will be one- and two-dimensional). We note that as a corollary of the Hopf--Rinow theorem, compact Riemannian manifolds are complete, which we will however not use in the following. Instead, we need
\begin{lemma}[{\cite[Lemma~5.4]{beem}}]
\label{lemma:hom}
If $(H,h)$ is a homogeneous Riemannian manifold, then $(H,h)$ is complete.
\end{lemma}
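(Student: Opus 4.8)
The plan is to establish geodesic completeness of $(H,h)$ and then invoke the Hopf--Rinow theorem, which for Riemannian manifolds equates geodesic completeness with completeness of the associated metric space. The geometric content is that homogeneity forces a \emph{uniform} lower bound on how far geodesics can be continued, independently of their starting point and initial direction.

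First I would fix one point $p \in H$. By the local existence theorem for the geodesic equation, together with continuous dependence on initial data and compactness of the unit sphere in $T_p H$, there is an $\varepsilon > 0$ such that every unit-speed geodesic emanating from $p$ is defined at least on $[0,\varepsilon]$ (hence, applying this to $-v$ as well, on $[-\varepsilon,\varepsilon]$). Now let $\phi$ be any isometry of $(H,h)$: it carries geodesics to geodesics and preserves speed, and $d\phi_p \colon T_p H \to T_{\phi(p)} H$ is a linear isometry onto $T_{\phi(p)} H$; consequently every unit-speed geodesic from $\phi(p)$ is the image under $\phi$ of a unit-speed geodesic from $p$, and is therefore also defined on $[0,\varepsilon]$. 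Since by hypothesis the isometry group acts transitively, this single $\varepsilon$ works at \emph{every} point of $H$.

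Next I would argue by contradiction. Suppose some maximal unit-speed geodesic has domain bounded above; after translating the parameter it is in particular defined on an interval $[0,b)$ with $b < \infty$, and I claim it extends past $b$. If $b \leq \varepsilon$, then $\gamma|_{[0,b)}$ is already the restriction of the geodesic through $\gamma(0)$ with velocity $\dot\gamma(0)$, which exists on $[0,\varepsilon]$, contradicting maximality. If $b > \varepsilon$, set $q = \gamma(b - \varepsilon/2)$; the geodesic through $q$ with velocity $\dot\gamma(b - \varepsilon/2)$ is defined on $[0,\varepsilon]$, and by uniqueness of geodesics it coincides with $\gamma$ on the overlap, hence extends $\gamma$ beyond $b$ — again a contradiction. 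The same reasoning applied in reverse time handles a domain bounded below, so every maximal geodesic has domain $\mathbb{R}$; thus $(H,h)$ is geodesically complete, and Hopf--Rinow gives that $(H,h)$ is complete.

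The one genuinely delicate point — the step I expect to be the main obstacle to spell out carefully — is the uniform bound $\varepsilon > 0$ at the base point: one must ensure the local existence interval for the geodesic flow can be chosen uniform over all directions in $T_p H$ (this is where compactness of the unit tangent sphere and smooth dependence on initial conditions enter) before transporting it to the rest of $H$ via the transitive isometry action; everything after that is a routine continuation argument.
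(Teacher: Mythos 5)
Your proof is correct. The paper does not prove this lemma at all --- it is quoted verbatim from the literature (Lemma~5.4 of Beem--Ehrlich--Easley) --- and your argument is precisely the standard proof of that cited result: compactness of the unit sphere in $T_pH$ plus smooth dependence on initial data gives a uniform existence time $\varepsilon>0$ at one point, transitivity of the isometry group (which maps unit-speed geodesics to unit-speed geodesics) propagates the same $\varepsilon$ to every point, the continuation argument then rules out finitely-bounded maximal domains, and Hopf--Rinow converts geodesic completeness into metric completeness. The one step you flag as delicate, the uniform $\varepsilon$ over all unit directions at $p$, is indeed the only place requiring care, and your appeal to compactness of the fibre of the unit tangent bundle over $p$ together with continuous dependence of solutions of the geodesic ODE on initial conditions settles it; no gap remains.
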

This covers in particular the flat case $(\mathbb{R}^d,\delta)$, the $d$-dimensional unit sphere $(\mathbb{S}^d, g_{\mathbb{S}^d})$, and the torus $(\mathbb{T}^d, \delta)$. For a subset of the real line, we obtain instead the following explicit condition:
\begin{proposition}
\label{prop:onedim-arclength}
The one-dimensional Riemannian manifold $(\man, g) = ( (a,b), \mathe^{2 \alpha(\cdot)})$ with $-\infty \leq a < b \leq \infty$ and a smooth function $\alpha$ is complete if and only if
\begin{equation*}
\int_a^p \mathe^{\alpha(r)} \total r = \infty \quad \text{and} \quad \int_p^b \mathe^{\alpha(r)} \total r = \infty
\end{equation*}
for an arbitrary point $p \in (a,b)$.
\end{proposition}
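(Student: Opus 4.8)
The plan is to reduce the statement to the one-dimensional case of the Hopf--Rinow theorem by exhibiting an explicit isometry between $(\man, g)$ and an interval of the real line equipped with its standard metric. First I would fix a point $p \in (a,b)$ and define $\Phi \colon (a,b) \to \mathbb{R}$ by $\Phi(r) = \int_p^r \mathe^{\alpha(t)} \total t$. Since $\alpha$ is smooth and $\mathe^{\alpha} > 0$, the map $\Phi$ is a smooth, strictly increasing diffeomorphism onto its image, which is the open interval $(-L_-, L_+)$ with $L_- = \int_a^p \mathe^{\alpha(r)} \total r \in (0,\infty]$ and $L_+ = \int_p^b \mathe^{\alpha(r)} \total r \in (0,\infty]$. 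Because $\Phi'(r) = \mathe^{\alpha(r)}$, we have $\Phi^*(\total s^2) = \mathe^{2\alpha(r)} \total r^2 = g$, so $\Phi$ is a Riemannian isometry from $(\man, g)$ onto $\bigl( (-L_-, L_+), \total s^2 \bigr)$; in other words, $\Phi$ is just the arc-length reparametrization.

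Next I would use the fact that geodesic completeness is an isometry invariant, so $(\man, g)$ is complete if and only if the interval $I := (-L_-, L_+) \subset \mathbb{R}$ is complete in the standard metric. For an interval of $\mathbb{R}$ this is elementary: the curve $s \mapsto s$ is (up to translation and reversal) the maximally extended unit-speed geodesic, and it is defined for all parameter values precisely when $I = \mathbb{R}$, i.e.\ when $L_- = L_+ = \infty$. Equivalently, if, say, $L_+ < \infty$, then the sequence $s_n = L_+ - 1/n$ is Cauchy in $I$ without a limit in $I$, so $I$ is not metrically complete, whereas $I = \mathbb{R}$ is both metrically and geodesically complete; by Hopf--Rinow the two notions coincide. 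Transporting this back through $\Phi$ yields exactly the claimed criterion.

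Finally I would observe that the criterion is independent of the choice of $p$: for another point $p' \in (a,b)$ the integrals differ by $\int_{p'}^p \mathe^{\alpha(t)} \total t$, which is finite because $\mathe^{\alpha}$ is continuous on the compact interval with endpoints $p$ and $p'$; hence $\int_a^{p} \mathe^{\alpha} = \infty$ iff $\int_a^{p'} \mathe^{\alpha} = \infty$, and likewise at the endpoint $b$. As for the main obstacle, there is none of real substance here: the entire content is the recognition that arc length provides a \emph{global} isometry onto an interval, after which the result is immediate from the one-dimensional Hopf--Rinow theorem. The only point requiring a little care is to verify that $\Phi$ is onto the stated interval and that its inverse is the maximal unit-speed geodesic through $p$, so that incompleteness of $I$ corresponds genuinely to a geodesic of $(\man, g)$ that cannot be extended to all parameter values in $\mathbb{R}$.
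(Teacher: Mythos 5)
Your argument is correct; it arrives at the same divergence criterion as the paper, but by noticeably different mechanics. The paper stays entirely with geodesics: it invokes Hopf--Rinow to reduce completeness to geodesic completeness, writes the affinely parametrized geodesic equation $\gamma''(t) + \alpha'(\gamma(t))[\gamma'(t)]^2 = 0$, solves it implicitly as $t(\gamma) = \mathe^{-\alpha(q)} \int_q^\gamma \mathe^{\alpha(r)} \total r$ (so the affine parameter is a constant rescaling of arc length measured from the initial point), and reads off that the geodesic extends to all $t \in \mathbb{R}$ precisely when the two boundary integrals diverge; base-point independence is handled by the same finiteness-on-compacts remark you make. You instead package that same arc-length function as a global isometry $\Phi$ onto the flat interval $(-L_-, L_+)$, then quote isometry invariance of completeness together with the elementary fact that an open interval of $\mathbb{R}$ with the standard metric is complete only when it is all of $\mathbb{R}$, passing between metric and geodesic completeness via Hopf--Rinow. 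Your route avoids writing and solving the geodesic ODE and makes the ``only if'' direction transparent through the Cauchy-sequence observation; the paper's route exhibits the geodesics explicitly, which is marginally more self-contained but amounts to the same underlying computation, since the solved geodesic is exactly the inverse of your arc-length map.
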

\begin{proof}
By the Hopf--Rinow theorem, it suffices to show that $(\man, g)$ is geodesically complete. We consider the affinely parametrized geodesic equation
\begin{equation*}
\gamma''(t) + \alpha'(\gamma(t)) [ \gamma'(t) ]^2 = 0
\end{equation*}
with initial data $\gamma(0) = q \in (a,b)$ and $\gamma'(0) = v \neq 0$. By a rescaling of $t$, we may assume without loss of generality that $v = 1$, and the geodesic equation can be solved in terms of
\begin{equation*}
t(\gamma) = \mathe^{- \alpha(q)} \int_q^\gamma \mathe^{\alpha(r)} \total r \eqend{.}
\end{equation*}
Geodesic completeness means that $\gamma(t)$ can be extended for all values $t \in \mathbb{R}$. We have $t(\gamma) > 0$ for $\gamma > q$, and can extend $t$ to $\infty$ if and only if $\int_q^b \mathe^{\alpha(r)} \total r = \infty$. Likewise, we have $t(\gamma) < 0$ for $\gamma < q$, and can extend $t$ to $-\infty$ if and only if $\int_a^q \mathe^{\alpha(r)} \total r = \infty$. Since $\alpha$ is a smooth function, the integral $\int_p^q \mathe^{\alpha(r)} \total r$ is finite for any $a < p, q < b$, and we conclude.
\end{proof}
Sometimes the concrete metric is too complicated to study exactly, but can be bounded from above and below. In this case, we have
\begin{proposition}
\label{prop:bounded}
Consider the Riemannian manifold $(M,g)$ and a metric $g_0$ such that $a g_0 \leq g \leq b g_0$ for some constants $a,b > 0$ (i.e., $g$ is sandwich bounded~\cite{Sanchez:2021ciy}). Then $(M,g)$ is a complete Riemannian manifold if and only if $(M,g_0)$ is a complete Riemannian manifold.
\end{proposition}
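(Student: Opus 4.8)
The plan is to reduce the statement to the Hopf--Rinow theorem, exactly as in the proof of Prop.~\ref{prop:onedim-arclength}: for a Riemannian manifold, geodesic completeness is equivalent to completeness of the induced length metric as a metric space. Thus it suffices to show that $(M,d_g)$ is a complete metric space if and only if $(M,d_{g_0})$ is, where $d_g$ and $d_{g_0}$ denote the distance functions associated to $g$ and $g_0$.

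First I would translate the pointwise bound on the metrics into a bound on curve lengths. Since $a g_0 \leq g \leq b g_0$ as symmetric bilinear forms on each tangent space, every tangent vector $v \in T_x M$ satisfies $\sqrt{a}\, \lVert v \rVert_{g_0} \leq \lVert v \rVert_g \leq \sqrt{b}\, \lVert v \rVert_{g_0}$. Integrating along an arbitrary piecewise $C^1$ curve $\gamma$ gives $\sqrt{a}\, L_{g_0}(\gamma) \leq L_g(\gamma) \leq \sqrt{b}\, L_{g_0}(\gamma)$, and taking the infimum over all curves joining two points $p,q \in M$ yields
\begin{equation*}
\sqrt{a}\, d_{g_0}(p,q) \leq d_g(p,q) \leq \sqrt{b}\, d_{g_0}(p,q) \eqend{.}
\end{equation*}

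Hence the identity map is a bi-Lipschitz homeomorphism between $(M,d_g)$ and $(M,d_{g_0})$. In particular the two metric spaces have the same Cauchy sequences and the same convergent sequences with the same limits, so one is complete precisely when the other is; applying the Hopf--Rinow equivalence once more, in both directions, we conclude that $(M,g)$ is geodesically complete if and only if $(M,g_0)$ is. I do not expect a genuine obstacle here: the only point requiring any care is the passage from the order relation $a g_0 \leq g \leq b g_0$ on quadratic forms to the inequalities between the induced norms, and then from norms to distances via the infimum over connecting curves, which is the elementary computation sketched above.
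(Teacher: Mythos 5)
Your argument is correct and is essentially the paper's own proof spelled out in detail: the paper likewise notes that the sandwich bound makes the norms (and hence the induced distance functions) equivalent, so that completeness of one metric is equivalent to completeness of the other, with Hopf--Rinow supplying the link to geodesic completeness. Your filling in of the curve-length and bi-Lipschitz steps is a faithful expansion of that one-line argument, not a different route.
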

\begin{proof}
The condition on $g$ means that $g$ and $g_0$ induce equivalent norms, and hence $(M,g)$ and $(M,g_0)$ are either both complete or both incomplete.
\end{proof}

As applications of the above propositions and lemmata, we show
\begin{proposition}
\label{prop:warpedsphere}
Consider the Riemannian manifold $(H,h)$ with $H = (a,b) \times \mathbb{S}^2$, $- \infty \leq a < b \leq \infty$, and metric
\begin{equation*}
h = \mathe^{2 \alpha(r)} \total r^2 + \mathe^{2 \beta(r)} g_{\mathbb{S}^2} \eqend{,}
\end{equation*}
where $\alpha$ and $\beta$ are smooth functions and $g_{\mathbb{S}^2} = \total \vartheta^2 + \sin^2 \vartheta \total \varphi^2$ is the standard metric on the unit sphere with $\vartheta \in [0,\pi)$ and $\varphi \in [0,2\pi)$. $(H,h)$ is complete if and only if
\begin{equation*}
\int_a^p \mathe^{\alpha(r)} \total r = \infty \quad \text{and} \quad \int_p^b \mathe^{\alpha(r)} \total r = \infty
\end{equation*}
for an arbitrary point $p \in (a,b)$.
\end{proposition}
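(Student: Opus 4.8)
The plan is to recognize the metric $h$ as a warped product over the one-dimensional base $(a,b)$ and then simply invoke the completeness results already assembled above. Writing $g_{(a,b)} = \mathe^{2 \alpha(r)} \total r^2$ for the base metric and $f(r) = \mathe^{2 \beta(r)}$ for the warping function, which is smooth and strictly positive, we have $h = g_{(a,b)} + f \, g_{\mathbb{S}^2}$, which is exactly the structure treated in Lemma~\ref{lemma:prod} with $M = (a,b)$ and $H = \mathbb{S}^2$. Hence $(H,h)$ is complete if and only if both $\bigl( (a,b), \mathe^{2 \alpha(r)} \total r^2 \bigr)$ and $(\mathbb{S}^2, g_{\mathbb{S}^2})$ are complete Riemannian manifolds.

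The sphere factor is immediate: $(\mathbb{S}^2, g_{\mathbb{S}^2})$ is homogeneous (indeed compact), hence complete by Lemma~\ref{lemma:hom}. The completeness of $(H,h)$ therefore reduces entirely to the completeness of the one-dimensional base $\bigl( (a,b), \mathe^{2 \alpha(r)} \total r^2 \bigr)$, and for this I would apply Proposition~\ref{prop:onedim-arclength} directly, with the same function $\alpha$: it asserts that this base is complete precisely when $\int_a^p \mathe^{\alpha(r)} \total r = \infty$ and $\int_p^b \mathe^{\alpha(r)} \total r = \infty$ for some (equivalently, any) $p \in (a,b)$. Combining the two reductions yields the stated equivalence.

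I do not anticipate a genuine obstacle here; the content is entirely in citing Lemma~\ref{lemma:prod} and Proposition~\ref{prop:onedim-arclength} correctly. The only point worth a line of care is checking that the hypotheses of Lemma~\ref{lemma:prod} are in fact met — the warping function $\mathe^{2 \beta(r)}$ depends only on the base coordinate $r$ and is smooth and positive — which in particular makes transparent the (perhaps initially surprising) fact that $\beta$, i.e.\ the radial profile of the sphere sizes, plays no role whatsoever in the completeness of $(H,h)$.
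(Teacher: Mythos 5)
Your proposal is correct and coincides with the paper's own proof: both apply Lemma~\ref{lemma:prod} with warping function $f = \mathe^{2\beta} > 0$, dispose of the sphere factor via Lemma~\ref{lemma:hom}, and reduce the interval factor to Prop.~\ref{prop:onedim-arclength}. No gaps.
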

\begin{proof}
Taking $f = \mathe^{2 \beta} > 0$ in Lemma~\ref{lemma:prod}, we obtain that $(H,h)$ is complete if and only if $( (a,b), \mathe^{2\alpha(\cdot)} )$ and $(\mathbb{S}^2, g_{\mathbb{S}^2})$ are complete Riemannian manifolds. For the sphere, this follows from Lemma~\ref{lemma:hom}, while for the first factor we use Prop.~\ref{prop:onedim-arclength} to conclude. 
\end{proof}
Note that we can perform a change of coordinates to $\rho = \exp\left[ \int \mathe^{\alpha(r) - \beta(r)} \total r \right]$, which results in
\begin{equation}
h = \frac{\mathe^{2 \beta(r)}}{\rho^2} \left( \total \rho^2 + \rho^2 g_{\mathbb{S}^2} \right) \eqend{.}
\end{equation}
Assuming that $\alpha$ and $\beta$ are such that it is possible to choose $\rho \in [0,\infty)$, we see that in this case $(H,h)$ is conformally related to flat space, with the conformal factor depending on the distance from the origin. In fact, we even have
\begin{proposition}
\label{prop:warpedhom}
Consider the Riemannian manifold $(H,h)$ with $H = (a,b) \times K$, $- \infty \leq a < b \leq \infty$, and metric
\begin{equation*}
h = \mathe^{2 \alpha(r)} \total r^2 + \mathe^{2 \beta(r)} g_K \eqend{,}
\end{equation*}
where $\alpha$ and $\beta$ are smooth functions and $(K, g_K)$ is a homogeneous Riemannian manifold. $(H,h)$ is complete if and only if
\begin{equation*}
\int_a^p \mathe^{\alpha(r)} \total r = \infty \quad \text{and} \quad \int_p^b \mathe^{\alpha(r)} \total r = \infty
\end{equation*}
for an arbitrary point $p \in (a,b)$.
\end{proposition}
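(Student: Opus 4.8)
The plan is to mirror the proof of Proposition~\ref{prop:warpedsphere} essentially verbatim, the point being that the two-sphere $\mathbb{S}^2$ entered that argument only through the fact that it is a homogeneous, hence complete, Riemannian manifold. First I would recognize $h = \mathe^{2 \alpha(r)} \total r^2 + \mathe^{2 \beta(r)} g_K$ as a warped product whose base is the one-dimensional manifold $\bigl( (a,b), \mathe^{2 \alpha(\cdot)} \bigr)$, whose fiber is $(K, g_K)$, and whose warping function is $f = \mathe^{2 \beta} \colon (a,b) \to (0,\infty)$; this $f$ is smooth and strictly positive because $\beta$ is smooth, so the hypotheses of Lemma~\ref{lemma:prod} are met (note that $f$ is here a function of the base coordinate $r$, as required).

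Next I would invoke Lemma~\ref{lemma:prod} to conclude that $(H,h)$ is complete if and only if both the base $\bigl( (a,b), \mathe^{2 \alpha(\cdot)} \bigr)$ and the fiber $(K, g_K)$ are complete Riemannian manifolds. Completeness of the fiber is automatic: since $(K, g_K)$ is homogeneous by assumption, Lemma~\ref{lemma:hom} applies. Hence the completeness of $(H,h)$ is equivalent to the completeness of the single one-dimensional factor $\bigl( (a,b), \mathe^{2 \alpha(\cdot)} \bigr)$ alone.

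Finally, Proposition~\ref{prop:onedim-arclength} characterizes the latter precisely by the divergence of the two arclength integrals $\int_a^p \mathe^{\alpha(r)} \total r$ and $\int_p^b \mathe^{\alpha(r)} \total r$ for an arbitrary interior point $p$, the choice of $p$ being immaterial because $\int_p^{p'} \mathe^{\alpha(r)} \total r$ is finite for any $a < p, p' < b$ by smoothness of $\alpha$. Chaining the equivalences from Lemma~\ref{lemma:prod} and Proposition~\ref{prop:onedim-arclength}, with the fiber's completeness guaranteed unconditionally by Lemma~\ref{lemma:hom}, yields the stated claim.

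I do not expect any genuine obstacle here: the proposition is an immediate corollary of Lemmas~\ref{lemma:prod} and~\ref{lemma:hom} together with Proposition~\ref{prop:onedim-arclength}, and the only thing to check is the bookkeeping that the fiber $\mathbb{S}^2$ in Proposition~\ref{prop:warpedsphere} may be replaced by an arbitrary homogeneous manifold without affecting the argument — which is clear, since Lemma~\ref{lemma:prod} imposes no condition on the fiber beyond completeness, and Lemma~\ref{lemma:hom} supplies exactly that. Should a self-contained proof be preferred, one could instead redo the affinely parametrized geodesic analysis directly, decomposing a geodesic into its radial and fiber parts, but the modular route via the three cited results is considerably shorter.
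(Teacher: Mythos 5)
Your proposal is correct and coincides with the paper's own proof, which simply runs the argument of Prop.~\ref{prop:warpedsphere} again (Lemma~\ref{lemma:prod} with $f = \mathe^{2\beta}$, Lemma~\ref{lemma:hom} for the fiber, Prop.~\ref{prop:onedim-arclength} for the one-dimensional factor), noting that Lemma~\ref{lemma:hom} applies to any homogeneous Riemannian manifold.
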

\begin{proof}
The proof proceeds exactly as the one of Prop.~\ref{prop:warpedsphere}, since Lemma~\ref{lemma:hom} applies to a general homogeneous Riemannian manifold.
\end{proof}

For doubly warped product spacetimes, one obtains simple corollaries of the above propositions and lemmata:
\begin{corollary}[{\cite{beempowell1982}}]
\label{corr:doublywarpedproduct1}
Consider the doubly warped product manifold $\man = M \times H$ with metric $g_\man = f g + e h$, where $f \colon H \to (0,\infty)$, $e \colon M \to (0,\infty)$ are smooth positive functions. Then $(\man,g_\man)$ is globally hyperbolic if and only if $(H,1/f h)$ is a complete Riemannian manifold and $(M,g)$ is a globally hyperbolic manifold.
\end{corollary}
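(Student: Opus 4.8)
The plan is to reduce the \emph{doubly} warped product to an ordinary (singly) warped product by a conformal rescaling and then invoke Prop.~\ref{prop:warpedproduct2}. The essential observation is that the warping function $f$ depends only on the factor $H$, so $1/f$ --- read as $1/f$ composed with the canonical projection $\man = M \times H \to H$ --- is a smooth, strictly positive function on all of $\man$, hence an admissible conformal factor; the same is true of its inverse $f$, so Prop.~\ref{prop:conformal} applied in both directions gives that $(\man, g_\man)$ is globally hyperbolic if and only if $(\man, \tfrac{1}{f} g_\man)$ is globally hyperbolic.

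Next I would simply compute
\begin{equation*}
\frac{1}{f} g_\man = \frac{1}{f} \left( f g + e h \right) = g + e \left( \frac{1}{f} h \right) = g + e \, \tilde{h} \eqend{,}
\end{equation*}
where $\tilde{h} := \tfrac{1}{f} h$ is again a Riemannian metric on $H$ (since $f > 0$) and $e \colon M \to (0,\infty)$ is a smooth positive function \emph{on the base $M$}. This is exactly a warped product metric of the type covered by Prop.~\ref{prop:warpedproduct2}, with base $(M,g)$, fiber $(H,\tilde{h})$ and warping function $e$.

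Applying Prop.~\ref{prop:warpedproduct2} to $(\man, g + e\tilde{h})$ then yields that it is globally hyperbolic if and only if $(H, \tilde{h}) = (H, \tfrac{1}{f} h)$ is a complete Riemannian manifold and $(M, g)$ is a globally hyperbolic manifold, which combined with the conformal equivalence from the first step is precisely the assertion. I do not anticipate a real obstacle: the one point that deserves a sentence of care is \emph{why} the rescaling by $1/f$ produces a singly warped product whose warping function has the correct dependence --- this works precisely because $f$ is a function of $H$ alone while $e$ is a function of $M$ alone, so that after dividing out $f$ the remaining fiber warping $e$ still lives only on the base $M$, as required by Prop.~\ref{prop:warpedproduct2}. (This is essentially the argument of Ref.~\cite{beempowell1982}.)
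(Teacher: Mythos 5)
Your proposal is correct and follows essentially the same route as the paper: a conformal rescaling of $g_\man$ by $\Omega = 1/f$ (Prop.~\ref{prop:conformal}) turns the doubly warped product into the singly warped product $g + e \left( \tfrac{1}{f} h \right)$, to which Prop.~\ref{prop:warpedproduct2} applies with warping function $e$ and fiber $(H, \tfrac{1}{f} h)$. Your explicit remarks that $1/f$ is an admissible conformal factor on all of $\man$ and that Prop.~\ref{prop:conformal} is used in both directions to get the equivalence are welcome points of care that the paper's terser proof leaves implicit.
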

\begin{proof}
By Prop.~\ref{prop:conformal} (taking $\Omega = 1/f$) it suffices to show that $(\man, g + e/f h)$ is globally hyperbolic, which in turn follows from Prop.~\ref{prop:warpedproduct2} (taking $f \to e$).
\end{proof}
A somewhat more complicated one is
\begin{corollary}[{\cite[Corr.~3.2]{unal2001}}]
\label{corr:doublywarpedproduct2}
Consider the doubly warped product manifold $\man = (a,b) \times B \times F$, $- \infty \leq a < b \leq \infty$, with metric $g = - \total t^2 + h(t) \bigl[ f g_B + b g_F \bigr]$, where $(B, g_B)$ and $(F, g_F)$ are complete Riemannian manifolds, and $h \colon (a,b) \to (0,\infty)$, $b \colon B \to (0,\infty)$, $f \colon F \to (0,\infty)$ are smooth positive functions. If $\inf b > 0$ or $\inf f > 0$, then $(\man,g)$ is globally hyperbolic.
\end{corollary}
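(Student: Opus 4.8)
The plan is to recognise the metric $g = - \total t^2 + h(t) \bigl[ f g_B + b g_F \bigr]$ as a singly warped product over the interval $(a,b)$ whose fibre is the doubly warped Riemannian product $\bigl( B \times F,\, f g_B + b g_F \bigr)$, and then to chain together two results already established above: Prop.~\ref{prop:doublywarped} to obtain completeness of the fibre, and Prop.~\ref{prop:warpedproduct1} to pass from completeness of the fibre to global hyperbolicity of $\man$.

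Concretely, I would first set $H = B \times F$ and equip it with the Riemannian metric $g_H = f g_B + b g_F$. Since $(B, g_B)$ and $(F, g_F)$ are complete by hypothesis, $b \colon B \to (0,\infty)$ and $f \colon F \to (0,\infty)$ are smooth and positive, and by assumption $\inf b > 0$ or $\inf f > 0$, Prop.~\ref{prop:doublywarped} applies verbatim and shows that $(H, g_H)$ is a complete Riemannian manifold.

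Second, I would observe that in this notation the spacetime metric is $g = - \total t^2 + h(t)\, g_H$ on $\man = (a,b) \times H$, i.e.\ exactly of the form treated in Prop.~\ref{prop:warpedproduct1}, with warping function $h \colon (a,b) \to (0,\infty)$ (smooth and positive by assumption) and fibre metric $g_H$. Since $(H, g_H)$ has just been shown to be complete, the ``if'' direction of Prop.~\ref{prop:warpedproduct1} gives that $(\man, g)$ is globally hyperbolic, which is the assertion.

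The argument is essentially bookkeeping, and the only genuine point to verify is that the structural hypotheses of the two cited results line up: that the doubly warped fibre $f g_B + b g_F$ appearing in the corollary is precisely the object of Prop.~\ref{prop:doublywarped} (with $f$ a function on $F$ warping $g_B$ and $b$ a function on $B$ warping $g_F$), and that the overall time-dependent factor $h(t)$ is admissible as the warping function in Prop.~\ref{prop:warpedproduct1}, which indeed allows an arbitrary smooth positive function on $(a,b)$. I do not expect a real obstacle beyond this matching of hypotheses; the one point worth flagging is that the condition ``$\inf b > 0$ or $\inf f > 0$'' is inherited unchanged from Prop.~\ref{prop:doublywarped} and is exactly what guarantees completeness of the fibre.
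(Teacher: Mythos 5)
Your proposal is correct and follows exactly the paper's own argument: identify the fibre $(B \times F, f g_B + b g_F)$ as complete via Prop.~\ref{prop:doublywarped}, then apply Prop.~\ref{prop:warpedproduct1} with warping function $h$ to conclude global hyperbolicity. No discrepancies to report.
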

\begin{proof}
By Prop.~\ref{prop:warpedproduct1} (with the replacement $f \to h$), the conclusion follows if $(B \times F, f g_B + b g_F)$ is a complete Riemannian manifold. This is seen to hold using Prop.~\ref{prop:doublywarped}.
\end{proof}

Finally, combining the above results we can show that
\begin{proposition}
\label{prop:nonstatic-twodim}
Consider the warped product manifold $\man = (a,b) \times \Sigma$, $-\infty \leq a < b \leq \infty$, where $\Sigma$ is a connected one-dimensional Riemannian manifold described by coordinates $x$, with metric $g = - \mathe^{h(t)} \total t^2 + \mathe^{k(t)} \total x^2$ and $h,k$ smooth functions. Then the spacetime $(\man, g)$ is globally hyperbolic.
\end{proposition}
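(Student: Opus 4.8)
The plan is to reduce the metric to the standard warped-product form of Prop.~\ref{prop:warpedproduct1} by a reparametrization of the time coordinate, and then read off global hyperbolicity from the completeness of the one-dimensional spatial factor.

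First I would introduce a new time coordinate $\tau = \tau(t) = \int_{t_0}^t \mathe^{h(s)/2} \total s$ with $t_0 \in (a,b)$ fixed arbitrarily. Since $h$ is smooth, the integrand $\mathe^{h(s)/2}$ is smooth and strictly positive, so $\tau$ is a smooth strictly increasing function of $t$, hence a diffeomorphism of $(a,b)$ onto an open interval $(\tilde a, \tilde b)$ with $-\infty \le \tilde a < \tilde b \le \infty$ (its endpoints being the, possibly infinite, limits of $\tau$ as $t \to a$ and $t \to b$). Since $\total \tau = \mathe^{h(t)/2} \total t$, the metric takes the form $g = - \total \tau^2 + \mathe^{\tilde k(\tau)} \total x^2$ on $(\tilde a, \tilde b) \times \Sigma$, where $\tilde k = k \circ t(\cdot)$ is again smooth because $\tau$ is a diffeomorphism.

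Next I would recognise this as precisely the warped product $(\tilde a, \tilde b) \times \Sigma$ with metric $- \total \tau^2 + f\, h_\Sigma$, where $f(\tau) = \mathe^{\tilde k(\tau)}$ is a smooth positive function of $\tau$ and $h_\Sigma = \total x^2$ is the Riemannian metric on $\Sigma$. Prop.~\ref{prop:warpedproduct1} then tells us that $(\man, g)$ is globally hyperbolic if and only if $(\Sigma, \total x^2)$ is a complete Riemannian manifold. To finish, note that a connected one-dimensional manifold is diffeomorphic to $\mathbb{R}$ or $\mathbb{S}^1$, and with the metric $\total x^2$ it is homogeneous (translations, resp.\ rotations, act transitively by isometries), so Lemma~\ref{lemma:hom} gives completeness of $(\Sigma, \total x^2)$; hence $(\man, g)$ is globally hyperbolic.

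The only steps that require any care are: (i) verifying that $\tau$ is a genuine diffeomorphism onto an interval — immediate from smoothness and positivity of $\mathe^{h/2}$, where it is crucial that Prop.~\ref{prop:warpedproduct1} permits infinite endpoints, so that we need not worry whether the $\tau$-range is bounded; and (ii) the completeness of $(\Sigma, \total x^2)$, for which one uses that $\Sigma$ is a boundaryless connected $1$-manifold (a proper open sub-interval with the flat metric would be incomplete). No estimates enter; the statement is in essence a bookkeeping corollary of Prop.~\ref{prop:warpedproduct1} and Lemma~\ref{lemma:hom}. Alternatively, one could first strip off the overall factor $\mathe^{k(t)}$ using Prop.~\ref{prop:conformal} and then reparametrize time, arriving at the same conclusion.
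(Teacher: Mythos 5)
Your argument is correct, but it takes a different route from the paper's. The paper does not reparametrize time at all: it invokes Prop.~\ref{prop:conformal} with the conformal factor $\Omega = \mathe^{-h(t)}$, so that the metric becomes $-\total t^2 + \mathe^{k(t)-h(t)}\total x^2$ in the \emph{original} coordinates, and then applies Prop.~\ref{prop:warpedproduct1} with $f = \mathe^{k-h}$ together with Lemma~\ref{lemma:hom} for the completeness of $(\Sigma,\total x^2)$. You instead use diffeomorphism invariance: the change of time coordinate $\tau = \int \mathe^{h/2}\total t$ is an isometry onto $((\tilde a,\tilde b)\times\Sigma,\, -\total\tau^2 + \mathe^{\tilde k(\tau)}\total x^2)$, after which Prop.~\ref{prop:warpedproduct1} and Lemma~\ref{lemma:hom} finish the job. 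Both proofs are two-line reductions to the same two ingredients; the paper's version avoids having to check that $\tau$ is a diffeomorphism and that $\tilde k$ is smooth (trivial, but bookkeeping), while yours avoids invoking the conformal-invariance result altogether, which makes it marginally more self-contained. Your closing alternative (strip $\mathe^{k(t)}$ conformally, then reparametrize) is yet a third variant and also works, though it is not the paper's choice of conformal factor. One further point in your favour: you explicitly flag that completeness of $(\Sigma,\total x^2)$ requires $\Sigma$ with this metric to be the full line or the circle (a proper sub-interval would be incomplete, and then the conclusion genuinely fails, e.g.\ a flat strip is not globally hyperbolic); the paper passes over this tacit assumption in silence, so your caveat (ii) is a useful clarification rather than a gap.
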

\begin{proof}
By Prop.~\ref{prop:conformal} it suffices to show that the conformally related spacetime $(\man, \mathe^{-h(t)} g)$ is globally hyperbolic. This in turn follows from Prop.~\ref{prop:warpedproduct1} with $f \to \mathe^{k(\cdot) - h(\cdot)}$, since $(\Sigma,\total x^2)$ is complete by Prop.~\ref{lemma:hom}.
\end{proof}
We can iterate this construction by splitting off each time a Riemannian part:
\begin{proposition}
\label{prop:nonstatic-fourdim}
Consider the manifold $\man = (a,b) \times \mathbb{R}^3$, $-\infty \leq a < b \leq \infty$, described by coordinates $t,x,y,z$, with metric
\begin{equation*}
g = - \mathe^{h(t)} \total t^2 + \mathe^{k(t)} \total x^2 + \mathe^{l(t,x)} \total y^2 + \mathe^{m(t,x,y)} \total z^2
\end{equation*}
and $h,k,l,m$ smooth functions. Then the spacetime $(\man, g)$ is globally hyperbolic.
\end{proposition}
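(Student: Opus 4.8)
The plan is to iterate Proposition~\ref{prop:warpedproduct2}, peeling off one flat Riemannian factor at a time, starting from the $z$-direction and reducing in three steps to the two-dimensional case already settled in Proposition~\ref{prop:nonstatic-twodim}.

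First I would write $\man = \man_3 \times \mathbb{R}$, where the first factor is $\man_3 = (a,b) \times \mathbb{R}^2$ with coordinates $t,x,y$ and metric $g_3 = - \mathe^{h(t)} \total t^2 + \mathe^{k(t)} \total x^2 + \mathe^{l(t,x)} \total y^2$, the second factor carries the coordinate $z$, and the warping function is $f_3 = \mathe^{m(t,x,y)}$, a smooth positive function on $\man_3$. Since $g = g_3 + f_3 \total z^2$ and $(\mathbb{R}, \total z^2)$ is complete (it is homogeneous, cf.\ Lemma~\ref{lemma:hom}), Proposition~\ref{prop:warpedproduct2} shows that $(\man, g)$ is globally hyperbolic if and only if $(\man_3, g_3)$ is globally hyperbolic.

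Next I would repeat the argument for $\man_3$: write $\man_3 = \man_2 \times \mathbb{R}$, where $\man_2 = (a,b) \times \mathbb{R}$ has coordinates $t,x$ and metric $g_2 = - \mathe^{h(t)} \total t^2 + \mathe^{k(t)} \total x^2$, the second factor carries the coordinate $y$, and the warping function is $f_2 = \mathe^{l(t,x)}$ on $\man_2$. Again $(\mathbb{R}, \total y^2)$ is complete, so by Proposition~\ref{prop:warpedproduct2} the global hyperbolicity of $(\man_3, g_3)$ reduces to that of $(\man_2, g_2)$. But $(\man_2, g_2)$ is precisely of the form treated in Proposition~\ref{prop:nonstatic-twodim} (with $\Sigma = \mathbb{R}$), hence globally hyperbolic, and unwinding the two equivalences yields the claim.

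The points requiring care are purely bookkeeping: at each stage one must check that the intermediate space $\man_k$ is a genuine time-oriented Lorentzian manifold so that Proposition~\ref{prop:warpedproduct2} applies in the next step (it is, since $\partial_t$ is everywhere timelike and the remaining coordinate directions are spacelike), and that the warping coefficient really is a function on the base of the split being performed -- this fixes the order of the reduction as $z$, then $y$, then $(t,x)$, because $\mathe^{m}$ depends on $x$ and $y$ while $\mathe^{l}$ depends on $x$. There is no genuine analytic obstacle here; all the content sits in the previously established Propositions~\ref{prop:warpedproduct2} and~\ref{prop:nonstatic-twodim}.
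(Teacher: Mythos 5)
Your proposal is correct and follows essentially the same route as the paper: two applications of Prop.~\ref{prop:warpedproduct2} peeling off the $z$- and then $y$-directions (with completeness of $(\mathbb{R},\total z^2)$ and $(\mathbb{R},\total y^2)$ from Lemma~\ref{lemma:hom}), reducing to the two-dimensional case handled by Prop.~\ref{prop:nonstatic-twodim}. Your remark on why the order of the reduction is forced by the arguments of $l$ and $m$ is a sensible point of care, but otherwise the argument coincides with the paper's proof.
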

\begin{proof}
By Prop.~\ref{prop:warpedproduct2} with $M = (a,b) \times \mathbb{R}^2$, $H = \mathbb{R}$ and $f = m$, $(\man, g)$ is globally hyperbolic if $( (a,b) \times \mathbb{R}^2, - \mathe^{h(t)} \total t^2 + \mathe^{k(t)} \total x^2 + \mathe^{l(t,x)} \total y^2)$ is globally hyperbolic, since $(\mathbb{R}, \total z^2)$ is complete by Lemma~\ref{lemma:hom}. This in turn follows again from Prop.~\ref{prop:warpedproduct2}, now with $M = (a,b) \times \mathbb{R}$, $H = \mathbb{R}$ and $f = l$, since $(\mathbb{R}, \total y^2)$ is complete by Lemma~\ref{lemma:hom} and $( (a,b) \times \mathbb{R}, - \mathe^{h(t)} \total t^2 + \mathe^{k(t)} \total x^2)$ is globally hyperbolic by Prop.~\ref{prop:nonstatic-twodim}.
\end{proof}

\section{Wormhole Spacetimes}
\label{sec:wormhole}

We first consider the classic Morris--Thorne wormhole spacetime and a generalization. The Morris--Thorne wormhole spacetime $(\man,g_\text{WH})$, introduced in~\cite{Morris:1988cz} with the goal of using it for rapid interstellar travel, has the following metric: 
\begin{equation}
g_\text{WH} = - \total t^2 + \total r^2 + \left( b^2 + r^2 \right) g_{\mathbb{S}^2} \eqend{,}
\end{equation}
where $b > 0$ is the throat radius, $r \in \mathbb{R}$ is the proper radial coordinate, $t \in \mathbb{R}$, and the metric $g_{\mathbb{S}^2}$ on the unit sphere is given by $g_{\mathbb{S}^2} = \total\vartheta^2 + \sin^2 \vartheta \total \varphi^2$ with $\vartheta \in [0,\pi)$ and $\varphi \in [0,2\pi)$.

\begin{lemma}
\label{lemma:wh-globalhyperbolic}
The Morris--Thorne wormhole spacetime $(\man,g_\text{WH})$ is globally hyperbolic. 
\end{lemma}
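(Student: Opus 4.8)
The plan is to recognize the wormhole metric as a warped product over a one-dimensional time factor and then invoke the completeness criterion already established for warped products over the two-sphere. Concretely, I would write $\man = \mathbb{R}_t \times H$ with $H = \mathbb{R}_r \times \mathbb{S}^2$ carrying the Riemannian metric $h = \total r^2 + (b^2 + r^2) g_{\mathbb{S}^2}$, so that $g_\text{WH} = - \total t^2 + h$. This is exactly the situation of Prop.~\ref{prop:warpedproduct1} with the (constant) warping function $f \equiv 1$ and with the time interval equal to $(-\infty,\infty)$; hence $(\man, g_\text{WH})$ is globally hyperbolic if and only if $(H,h)$ is a complete Riemannian manifold.

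It then remains to verify completeness of $(H,h)$, and for this I would apply Prop.~\ref{prop:warpedsphere}. The slice metric has precisely the form $h = \mathe^{2\alpha(r)} \total r^2 + \mathe^{2\beta(r)} g_{\mathbb{S}^2}$ with the smooth functions $\alpha(r) \equiv 0$ and $\beta(r) = \frac{1}{2} \ln(b^2 + r^2)$, the latter being smooth on all of $\mathbb{R}$ since $b > 0$ guarantees $b^2 + r^2 > 0$. The radial interval is $(-\infty,\infty)$, so the two completeness integrals in Prop.~\ref{prop:warpedsphere} read $\int_{-\infty}^{p} \mathe^{\alpha(r)} \total r = \int_{-\infty}^{p} \total r = \infty$ and $\int_{p}^{\infty} \mathe^{\alpha(r)} \total r = \int_{p}^{\infty} \total r = \infty$ for any $p \in \mathbb{R}$, both of which hold trivially. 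Therefore $(H,h)$ is complete, and the previous paragraph yields global hyperbolicity of $(\man, g_\text{WH})$.

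There is essentially no hard step here: once the metric is cast as a warped product, the statement is an immediate corollary of Props.~\ref{prop:warpedproduct1} and~\ref{prop:warpedsphere}. The only points requiring a little care are notational — the throat radius $b$ should not be conflated with the right endpoint of the radial interval, which here is $+\infty$ — and the (automatic) remark that $\beta(r) = \frac{1}{2}\ln(b^2+r^2)$ is globally smooth because $b^2 + r^2$ never vanishes. One could alternatively bypass Prop.~\ref{prop:warpedsphere} and establish completeness of $(H,h)$ directly from Lemma~\ref{lemma:prod} together with Prop.~\ref{prop:onedim-arclength} and Lemma~\ref{lemma:hom}, but routing through Prop.~\ref{prop:warpedsphere} is the most economical presentation.
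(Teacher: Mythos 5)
Your argument is correct and coincides with the paper's own proof: the paper gives exactly this route (Prop.~\ref{prop:warpedproduct1} with $f\equiv 1$ plus Prop.~\ref{prop:warpedsphere} with $\alpha=0$, $\beta=\tfrac{1}{2}\ln(b^2+r^2)$) as the alternative argument in its proof of Lemma~\ref{lemma:wh-globalhyperbolic}, alongside a first variant that instead takes the base to be two-dimensional Minkowski space and applies Prop.~\ref{prop:warpedproduct2}. Your verification of the completeness integrals and the smoothness remark are fine, so nothing is missing.
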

\begin{proof}
We write the manifold $(\man,g)$ as a warped product manifold constructed from the two-dimensional Minkowski spacetime $(\mathbb{R}^2, \eta)$ as the base and the two-dimensional unit sphere $(\mathbb{S}^2, g_{\mathbb{S}^2})$ as the fiber, with $f$ the smooth positive function $f\colon \mathbb{R}^2 \to (0,\infty)$, $f(t,r) = b^2 + r^2$. By Prop.~\ref{prop:warpedproduct2}, $(\man,g_\text{WH})$ is globally hyperbolic if the base is globally hyperbolic and the fiber is a complete Riemannian manifold. The former holds by combining Prop.~\ref{prop:warpedproduct1} and Lemma~\ref{lemma:hom}, and the latter by Lemma~\ref{lemma:hom}.

Alternatively, using Prop.~\ref{prop:warpedproduct1}, $(\man,g_\text{WH})$ is globally hyperbolic if $(\mathbb{R} \times \mathbb{S}^2, h)$ with $h = \total r^2 + \left( b^2 + r^2 \right) g_{\mathbb{S}^2}$ is complete. This in turn follows from Prop.~\ref{prop:warpedsphere}, taking $a = - \infty$, $b = \infty$, $\alpha = 0$ and $\beta = \frac{1}{2} \ln\left( b^2 + r^2 \right)$.
\end{proof}

Regarding the Klein--Gordon operator on the Morris--Thorne wormhole spacetime, we obtain
\begin{proposition}
\label{prop:selfadjoint-wh}
The spatial part of the Klein--Gordon operator for the spacetime $(\man,g_\text{WH})$ (with a potential $V$ satisfying the conditions given in Sec.~\ref{sec:kg}) is essentially self-adjoint on $C_0^\infty(\mathbb{R} \times \mathbb{S}^2)$.
\end{proposition}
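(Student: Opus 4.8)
The plan is to check that the Morris--Thorne metric meets the hypotheses of the essential self-adjointness criterion recalled in Sec.~\ref{sec:kg}, namely \cite[Theorem 4.1]{Much:2018ikx}. First I would note that $g_\text{WH}$ is already of the foliated form~\eqref{eq:foliation} with lapse $N \equiv 1$, time coordinate $t$, spatial manifold $\Sigma = \mathbb{R} \times \mathbb{S}^2$, and ($t$-independent) spatial metric
\begin{equation*}
h = \total r^2 + \left( b^2 + r^2 \right) g_{\mathbb{S}^2} \eqend{.}
\end{equation*}
Because $N \equiv 1$, the rescaled metric $\tilde h = N^{-2} h$ equals $h$ and the rescaled potential $N^2 V$ equals $V$. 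Hence, by \cite[Theorem 4.1]{Much:2018ikx}, it suffices to establish that $(\mathbb{R} \times \mathbb{S}^2, h)$ is geodesically complete, the conditions on $V$ being part of the hypothesis of the proposition.

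The geodesic completeness of $(\mathbb{R} \times \mathbb{S}^2, h)$ has, however, already been proven inside Lemma~\ref{lemma:wh-globalhyperbolic}: one applies Prop.~\ref{prop:warpedsphere} with $a = -\infty$, $b = \infty$, $\alpha = 0$ and $\beta = \frac{1}{2} \ln\left( b^2 + r^2 \right)$, and the required arclength conditions hold trivially since $\int_{-\infty}^p \mathe^{\alpha(r)} \total r = \int_{-\infty}^p \total r = \infty$ and $\int_p^\infty \total r = \infty$. (Equivalently, one combines Lemma~\ref{lemma:prod} with Lemma~\ref{lemma:hom} for the $\mathbb{S}^2$ factor and Prop.~\ref{prop:onedim-arclength} for the radial factor.) For completeness of the argument I would add a remark that the physically natural choice $V = m^2 + \xi R$ does satisfy the conditions of Sec.~\ref{sec:kg}: the Ricci scalar of $g_\text{WH}$ is the smooth bounded function $R = - 2 b^2 / (b^2 + r^2)^2$, which is negative and bounded below, so that $V$ is bounded from below for any sign of $\xi$ and, being bounded, is locally $L^2$.

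I do not expect a genuine obstacle: the statement is a direct corollary of \cite[Theorem 4.1]{Much:2018ikx} combined with the completeness already contained in Lemma~\ref{lemma:wh-globalhyperbolic}. The one point deserving a sentence of care is that the Riemannian structure entering the self-adjointness criterion is $\tilde h = N^{-2} h$ and not $h$ itself; the two coincide here only because the spacetime is static with unit lapse, and this identification would have to be revisited in a chart with non-trivial lapse or after a conformal rescaling of the metric.
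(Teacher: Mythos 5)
Your proposal is correct and follows essentially the same route as the paper: identify the foliated form~\eqref{eq:foliation} with $N = 1$ so that $\tilde h = h$, invoke \cite[Theorem 4.1]{Much:2018ikx}, and take the completeness of $(\mathbb{R}\times\mathbb{S}^2, h)$ from the alternative (Prop.~\ref{prop:warpedsphere}) argument in Lemma~\ref{lemma:wh-globalhyperbolic}. The extra check that $V = m^2 + \xi R$ with $R = -2b^2/(b^2+r^2)^2$ satisfies the conditions of Sec.~\ref{sec:kg} is a nice, correct addition but not needed, since the hypotheses on $V$ are assumed in the statement.
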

\begin{proof}
The metric of the Morris--Thorne wormhole has the form~\eqref{eq:foliation} with the lapse $N = 1$ and the spatial metric $h = \total r^2 + \left( b^2 + r^2 \right) g_{\mathbb{S}^2}$. Using~\cite[Theorem 4.1]{Much:2018ikx}, the conclusion thus follows if the Riemannian manifold $(\Sigma = \mathbb{R} \times \mathbb{S}^2, \tilde{h} = N^{-2} h)$ is geodesically complete. By the Hopf--Rinow theorem, this holds if $(\Sigma, h)$ is complete, which we have shown in the alternative proof of Lemma~\ref{lemma:wh-globalhyperbolic}.
\end{proof}

Let us generalize these results to a more general space-time $(\man, g_\text{WH+})$ with $\man = \mathbb{R}^2 \times K$ and metric
\begin{equation}
g_\text{WH+} = - \total t^2 + \mathe^{2 \alpha(r)} \total r^2 + \mathe^{2 \beta(r)} g_K \eqend{,}
\end{equation}
where $\alpha$ and $\beta$ are arbitrary smooth functions, and $(K,g_K)$ is a homogeneous Riemannian manifold. We obtain
\begin{proposition}
\label{prop:selfadjoint-whgen}
The spatial part of the Klein--Gordon operator for the spacetime $(\man, g_\text{WH+})$ (with a potential $V$ satisfying the conditions given in Sec.~\ref{sec:kg}) is essentially self-adjoint on $C_0^\infty(\mathbb{R} \times K)$, and the generalized wormhole spacetime $(\man, g_\text{WH+})$ is globally hyperbolic if
\begin{equation*}
\int_0^R \mathe^{\alpha(r)} \total r = \infty \quad \text{and} \quad \int_R^\infty \mathe^{\alpha(r)} \total r = \infty
\end{equation*}
for some $R > 0$.
\end{proposition}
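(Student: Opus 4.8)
The plan is to recognize $g_\text{WH+}$ as a special instance of the foliated form~\eqref{eq:foliation} and to reduce both assertions to the completeness of the spatial slice, which is governed by Prop.~\ref{prop:warpedhom}. Concretely, $g_\text{WH+}$ has the form~\eqref{eq:foliation} with lapse $N=1$, time coordinate $t$, and spatial metric $h = \mathe^{2\alpha(r)}\total r^2 + \mathe^{2\beta(r)} g_K$ on the slice $\Sigma = \mathbb{R}\times K$. This is exactly the metric treated in Prop.~\ref{prop:warpedhom} with the interval $(a,b)$ taken to be all of $\mathbb{R}$, so $(\Sigma,h)$ is complete if and only if the displayed divergence conditions on $\int\mathe^{\alpha}$ hold --- completeness of the homogeneous factor $(K,g_K)$ being automatic by Lemma~\ref{lemma:hom}, and that of the $r$-line by Prop.~\ref{prop:onedim-arclength}. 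We assume these conditions from now on.

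For the essential self-adjointness I would invoke \cite[Theorem~4.1]{Much:2018ikx} directly. Since $N\equiv 1$, the rescaled metric $\tilde h = N^{-2} h$ coincides with $h$, and by the Hopf--Rinow theorem $(\Sigma,\tilde h)$ is geodesically complete because $(\Sigma,h)$ is complete; moreover the rescaled potential $N^2 V = V$ is semi-bounded from below and locally $L^2$ by the standing assumption on $V$ from Sec.~\ref{sec:kg}, and the weighted measure of~\cite{Much:2018ikx} reduces to the Riemannian volume of $(\Sigma,h)$. Hence the hypotheses of~\cite[Theorem~4.1]{Much:2018ikx} are met, and $w^2$ is essentially self-adjoint on $C_0^\infty(\Sigma) = C_0^\infty(\mathbb{R}\times K)$.

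For global hyperbolicity I would write $(\man,g_\text{WH+}) = \bigl(\mathbb{R}\times\Sigma,\, -\total t^2 + h\bigr)$ and apply Prop.~\ref{prop:warpedproduct1} with base interval $(a,b) = \mathbb{R}$ (carrying the coordinate $t$), fiber $(H,h) = (\Sigma,h)$, and constant warping function $f\equiv 1$, which is indeed a smooth positive function; equivalently one may use Prop.~\ref{prop:warpedproduct2} with $M=\mathbb{R}$ the one-dimensional Minkowski spacetime and again $f\equiv 1$. Either way $(\man,g_\text{WH+})$ is globally hyperbolic precisely when $(\Sigma,h)$ is a complete Riemannian manifold, which holds under the stated conditions by the first paragraph.

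I do not anticipate a substantive obstacle: the proposition is a corollary of the apparatus already assembled in Sec.~\ref{sec:global}, and the only delicate points are bookkeeping ones --- checking that $N=1$ makes $\tilde h = h$, so that Hopf--Rinow converts metric completeness into the geodesic completeness required by~\cite[Theorem~4.1]{Much:2018ikx}, and confirming that $C_0^\infty(\Sigma)$ is the core asserted there and that the assumed properties of $V$ are precisely the ones needed. It is also worth noting that the displayed integral conditions are nothing but the completeness criterion of Prop.~\ref{prop:warpedhom} specialized to $r$ ranging over the whole real line, i.e.\ non-integrability of $\mathe^{\alpha}$ at both ends of the $r$-axis.
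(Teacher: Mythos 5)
Your proposal is correct and follows essentially the same route as the paper, which simply refers back to the alternative proof of Lemma~\ref{lemma:wh-globalhyperbolic} (now using Prop.~\ref{prop:warpedhom} in place of Prop.~\ref{prop:warpedsphere}) together with the argument of Prop.~\ref{prop:selfadjoint-wh}; you merely spell out explicitly the steps (lapse $N=1$, $\tilde h = h$, Hopf--Rinow, Prop.~\ref{prop:warpedproduct1} with $f\equiv 1$, completeness of $(K,g_K)$ via Lemma~\ref{lemma:hom}) that the paper leaves as references. Your closing observation that the displayed conditions amount to non-integrability of $\mathe^{\alpha}$ at both ends of the $r$-axis is the natural reading of the criterion of Prop.~\ref{prop:warpedhom} with $(a,b)=(-\infty,\infty)$, consistent with the intent of the statement.
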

\begin{proof}
The proof follows exactly as in the alternative proof of Lemma~\ref{lemma:wh-globalhyperbolic} (where now $\alpha$ and $\beta$ are general, and we employ Prop.~\ref{prop:warpedhom} instead of Prop.~\ref{prop:warpedsphere}) and in the proof of Prop.~\ref{prop:selfadjoint-wh} (where now $h$ is general).
\end{proof}

\section{Spherically symmetric, static Spacetimes}
\label{sec:spherical}

The most general form of the metric for a static, spherically symmetric spacetime $(\man, g_\text{ssym})$ reads
\begin{equation}
\label{met:sphsym}
g_\text{ssym} = - \mathe^{2 \alpha(r)} \total t^2 + \mathe^{2 \beta(r)} \total r^2 + r^2 g_{\mathbb{S}^2} \eqend{,}
\end{equation}
where $\alpha$ and $\beta$ are smooth functions, $t \in \mathbb{R}$, and $r \in (r_0,\infty)$. We have
\begin{proposition}
\label{prop:selfadjoint-ssym}
The spatial part of the Klein--Gordon operator for the spacetime $(\man,g_\text{ssym})$ (with a potential $V$ satisfying the conditions given in Sec.~\ref{sec:kg}) is essentially self-adjoint on $C_0^\infty(\Sigma)$, where $\Sigma = (r_0,\infty) \times \mathbb{S}^2$, if
\begin{equation}
\label{eq:ssym-condition}
\int_{r_0}^R \mathe^{\beta(r) - \alpha(r)} \total r = \infty \quad \text{and} \quad \int_R^\infty \mathe^{\beta(r) - \alpha(r)} \total r = \infty
\end{equation}
for some $R > r_0$.
\end{proposition}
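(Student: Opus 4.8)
The plan is to treat this as an immediate corollary of the essential self-adjointness criterion recalled in Section~\ref{sec:kg} (i.e.\ \cite[Theorem 4.1]{Much:2018ikx}) combined with Proposition~\ref{prop:warpedsphere}. First I would cast the metric~\eqref{met:sphsym} into the foliated form~\eqref{eq:foliation}: comparing with $g = - N^2 \total t^2 + h_{ij} \total x^i \total x^j$ one reads off the lapse $N = \mathe^{\alpha(r)}$ and the (time-independent) spatial metric $h = \mathe^{2\beta(r)} \total r^2 + r^2 g_{\mathbb{S}^2}$ on the slice $\Sigma = (r_0,\infty) \times \mathbb{S}^2$, where $r > 0$ throughout so that $r^2 g_{\mathbb{S}^2}$ is genuinely Riemannian on the sphere factor. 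Since we assume the potential $V$ satisfies the hypotheses of Section~\ref{sec:kg}, \cite[Theorem 4.1]{Much:2018ikx} reduces the claim to showing that the rescaled Riemannian manifold $(\Sigma, \tilde h = N^{-2} h)$ is geodesically complete.

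Next I would compute the rescaled metric explicitly,
\begin{equation*}
\tilde h = \mathe^{-2\alpha(r)} h = \mathe^{2[\beta(r) - \alpha(r)]} \total r^2 + \mathe^{-2\alpha(r)} r^2 \, g_{\mathbb{S}^2} \eqend{,}
\end{equation*}
and recognise it as precisely the warped product of Proposition~\ref{prop:warpedsphere} with $a = r_0$, $b = \infty$, radial warping exponent $\hat\alpha(r) = \beta(r) - \alpha(r)$, and sphere warping exponent $\hat\beta(r) = \ln r - \alpha(r)$ (so that $\mathe^{2\hat\beta(r)} = \mathe^{-2\alpha(r)} r^2$); both are smooth on $(r_0,\infty)$ because $\alpha,\beta$ are smooth and $r \mapsto \ln r$ is smooth for $r > 0$. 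Proposition~\ref{prop:warpedsphere} then yields that $(\Sigma,\tilde h)$ is complete if and only if $\int_{r_0}^R \mathe^{\hat\alpha(r)} \total r = \int_{r_0}^R \mathe^{\beta(r)-\alpha(r)} \total r = \infty$ and $\int_R^\infty \mathe^{\beta(r)-\alpha(r)} \total r = \infty$ for some (equivalently every) $R \in (r_0,\infty)$, which is exactly condition~\eqref{eq:ssym-condition}.

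Finally, by the Hopf--Rinow theorem metric completeness of $(\Sigma,\tilde h)$ coincides with geodesic completeness, so under~\eqref{eq:ssym-condition} the hypothesis of~\cite[Theorem 4.1]{Much:2018ikx} is met and essential self-adjointness of $w^2$ on $C_0^\infty(\Sigma)$ follows. Every step is an application of an already-established result, so I do not expect a substantial obstacle; the only point requiring a little care is checking that $\tilde h$ genuinely fits the hypotheses of Proposition~\ref{prop:warpedsphere} — in particular that the two warping exponents are smooth on all of $(r_0,\infty)$ — which is immediate here. As an alternative to invoking Proposition~\ref{prop:warpedsphere}, one could factor $\tilde h$ through Lemma~\ref{lemma:prod} (taking the radial line as base and $f(r) = \mathe^{-2\alpha(r)} r^2$ as warping function), handle the sphere factor with Lemma~\ref{lemma:hom}, and handle the radial factor with Proposition~\ref{prop:onedim-arclength}, arriving at the same condition~\eqref{eq:ssym-condition}.
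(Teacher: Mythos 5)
Your proof is correct and takes essentially the same route as the paper's: read off the lapse $N = \mathe^{\alpha(r)}$ and spatial metric $h$, reduce via \cite[Theorem 4.1]{Much:2018ikx} and Hopf--Rinow to completeness of $(\Sigma, \mathe^{-2\alpha(r)} h)$, and conclude with Prop.~\ref{prop:warpedsphere} applied with the radial exponent $\beta - \alpha$. Your explicit identification of the sphere warping exponent $\ln r - \alpha$ and the remark that the condition holds for some, equivalently every, $R$ are welcome but minor elaborations of the same argument.
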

\begin{proof}
The metric~\eqref{met:sphsym} has the form~\eqref{eq:foliation} with the lapse $N = \mathe^{\alpha(r)}$ and the spatial metric $h = \mathe^{2 \beta(r)} \total r^2 + r^2 g_{\mathbb{S}^2}$. Using~\cite[Theorem 4.1]{Much:2018ikx}, the conclusion thus follows if the Riemannian manifold $(\Sigma, \tilde{h} = N^{-2} h)$ is geodesically complete. By the Hopf--Rinow theorem, this holds if $(\Sigma, \mathe^{- 2 \alpha(r)} h)$ is a complete Riemannian manifold, which in turn follows from Prop.~\ref{prop:warpedsphere} with $a = r_0$, $b = \infty$ and replacing the function $\alpha \to \beta - \alpha$.
\end{proof}
\begin{corollary}
\label{corr:ssym-globalhyperbolic}
Under the conditions of Prop.~\ref{prop:selfadjoint-ssym}, $(\man,g_\text{ssym})$ is globally hyperbolic.
\end{corollary}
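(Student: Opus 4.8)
The plan is to deduce global hyperbolicity of $(\man,g_\text{ssym})$ as an immediate consequence of the essential self-adjointness argument already carried out in Prop.~\ref{prop:selfadjoint-ssym}, since that proof in fact establishes completeness of the relevant Riemannian factor, which is precisely what Prop.~\ref{prop:warpedproduct1} requires. Concretely, I would first recall the conformal trick: by Prop.~\ref{prop:conformal} with $\Omega = \mathe^{-2\alpha(r)} > 0$, the spacetime $(\man, g_\text{ssym})$ is globally hyperbolic if and only if $(\man, \mathe^{-2\alpha(r)} g_\text{ssym})$ is, and the rescaled metric reads
\begin{equation*}
\mathe^{-2\alpha(r)} g_\text{ssym} = - \total t^2 + \mathe^{2(\beta(r) - \alpha(r))} \total r^2 + \mathe^{-2\alpha(r)} r^2 g_{\mathbb{S}^2} \eqend{.}
\end{equation*}

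Next I would write this as the warped product $\man = \mathbb{R} \times \Sigma$ with base the $t$-line (with metric $-\total t^2$) and fiber $(\Sigma, \tilde h)$, $\Sigma = (r_0,\infty) \times \mathbb{S}^2$ and $\tilde h = \mathe^{2(\beta-\alpha)} \total r^2 + \mathe^{-2\alpha} r^2 g_{\mathbb{S}^2}$, and invoke Prop.~\ref{prop:warpedproduct1}: $(\man, \mathe^{-2\alpha} g_\text{ssym})$ is globally hyperbolic if and only if $(\Sigma, \tilde h)$ is a complete Riemannian manifold. But $(\Sigma, \tilde h) = (\Sigma, N^{-2} h)$ is exactly the manifold whose geodesic completeness was established in the proof of Prop.~\ref{prop:selfadjoint-ssym} (via Prop.~\ref{prop:warpedsphere} with $\alpha \to \beta - \alpha$, $a = r_0$, $b = \infty$), under assumption~\eqref{eq:ssym-condition}, and by the Hopf--Rinow theorem geodesic completeness is equivalent to metric completeness. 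Hence the hypothesis of Prop.~\ref{prop:selfadjoint-ssym} already delivers the completeness needed, and global hyperbolicity follows.

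Since this is a corollary whose ingredients are all in place, there is essentially no obstacle; the only point requiring a little care is bookkeeping the chain of equivalences $g_\text{ssym} \leftrightarrow \mathe^{-2\alpha} g_\text{ssym} \leftrightarrow$ completeness of $(\Sigma, N^{-2}h)$, and noting that Prop.~\ref{prop:warpedsphere} gives an "if and only if" so that condition~\eqref{eq:ssym-condition} is exactly the right one. I would therefore keep the proof to two or three sentences: apply Prop.~\ref{prop:conformal} with $\Omega = \mathe^{-2\alpha(r)}$, then Prop.~\ref{prop:warpedproduct1}, and observe that the required completeness of $(\Sigma, N^{-2} h)$ was already shown in the proof of Prop.~\ref{prop:selfadjoint-ssym}. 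One could alternatively phrase it even more briefly by noting that, quite generally, whenever the Much--et~al.\ criterion applies because $(\Sigma, \tilde h)$ is complete, the ambient spacetime of the form~\eqref{eq:foliation} with lapse $N$ and no shift is globally hyperbolic by Props.~\ref{prop:conformal} and~\ref{prop:warpedproduct1}; but spelling out the specific substitution makes the corollary self-contained.
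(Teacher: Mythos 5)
Your proposal is correct and follows essentially the same route as the paper: apply Prop.~\ref{prop:conformal} with $\Omega = \mathe^{-2\alpha(r)}$, then Prop.~\ref{prop:warpedproduct1} (with trivial warping, fiber $(\Sigma, N^{-2}h)$), and invoke the completeness of $(\Sigma, \mathe^{-2\alpha(r)} h)$ already established in the proof of Prop.~\ref{prop:selfadjoint-ssym}. The paper's own proof is exactly this two-line combination, so no further changes are needed.
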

\begin{proof}
Combining Prop.~\ref{prop:conformal} (with the conformal factor $\Omega(r) = \mathe^{- 2 \alpha(r)}$) and Prop.~\ref{prop:warpedproduct1}, we conclude using the completeness of $(\Sigma, \mathe^{- 2 \alpha(r)} h)$ that we showed in the proof of Prop.~\ref{prop:selfadjoint-ssym}.
\end{proof}

We now consider various physically relevant examples.
\begin{example}
The metric of the exterior Schwarzschild spacetime $(\man,g_\text{BH})$ reads
\begin{equation*}
g_\text{BH} = - \left( 1 - \frac{2 M}{r} \right) \total t^2 + \left( 1 - \frac{2 M}{r} \right)^{-1} \total r^2 + r^2 g_{\mathbb{S}^2} \eqend{,}
\end{equation*}
where $t \in \mathbb{R}$ and $r \in (2M,\infty)$.
\begin{lemma}
\label{lemma:schwarzschild}
The exterior Schwarzschild spacetime $(\man,g_\text{BH})$ is globally hyperbolic, and the spatial part of the Klein--Gordon operator (with a potential $V$ satisfying the conditions given in Sec.~\ref{sec:kg}) is essentially self-adjoint on $C_0^\infty\left( (2 M,\infty) \times \mathbb{S}^2 \right)$.
\end{lemma}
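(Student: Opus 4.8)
The plan is to recognize the exterior Schwarzschild metric as the special case of the general static, spherically symmetric metric~\eqref{met:sphsym} obtained by taking $r_0 = 2M$, $\mathe^{2\alpha(r)} = 1 - 2M/r$ and $\mathe^{2\beta(r)} = (1 - 2M/r)^{-1}$ on $r \in (2M,\infty)$. In particular $\beta(r) = - \alpha(r)$, and both $\alpha$ and $\beta$ are smooth on $(2M,\infty)$, so that Prop.~\ref{prop:selfadjoint-ssym} and Corollary~\ref{corr:ssym-globalhyperbolic} are applicable provided we verify the two integral conditions~\eqref{eq:ssym-condition}.

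Since $\beta - \alpha = - 2\alpha$, we have $\mathe^{\beta(r) - \alpha(r)} = \mathe^{-2\alpha(r)} = (1 - 2M/r)^{-1} = 1 + 2M/(r - 2M)$. Fix any $R > 2M$. Near the horizon the estimate $1 + 2M/(r-2M) \geq 2M/(r-2M)$ together with $\int_{2M}^R (r - 2M)^{-1} \total r = \infty$ gives $\int_{2M}^R \mathe^{\beta - \alpha} \total r = \infty$; at large $r$ the integrand tends to $1$, hence is bounded below by $\tfrac12$ for $r$ large, so $\int_R^\infty \mathe^{\beta - \alpha} \total r = \infty$ as well. Thus condition~\eqref{eq:ssym-condition} holds, and Prop.~\ref{prop:selfadjoint-ssym} yields essential self-adjointness of the spatial Klein--Gordon operator on $C_0^\infty\bigl( (2M,\infty) \times \mathbb{S}^2 \bigr)$, while Corollary~\ref{corr:ssym-globalhyperbolic} yields global hyperbolicity of $(\man, g_\text{BH})$.

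I expect no genuine obstacle here: the whole content reduces to the two elementary improper-integral estimates above, the only delicate endpoint being $r \to 2M$, where the logarithmic divergence of $\int (r-2M)^{-1} \total r$ does the work. For completeness I would also note that the Schwarzschild spacetime is Ricci-flat, $R = 0$, so the standard potential $V = m^2 + \xi R = m^2 \geq 0$ is automatically semi-bounded from below and, being constant, locally $L^2$, so that the hypotheses on $V$ from Sec.~\ref{sec:kg} invoked by Prop.~\ref{prop:selfadjoint-ssym} are satisfied.
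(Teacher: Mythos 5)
Your proposal is correct and follows the same route as the paper: identify $\alpha(r)=-\beta(r)=\tfrac12\ln\left(1-\tfrac{2M}{r}\right)$ and apply Prop.~\ref{prop:selfadjoint-ssym} and Corr.~\ref{corr:ssym-globalhyperbolic} after checking condition~\eqref{eq:ssym-condition}. The only (immaterial) difference is that you verify the divergence of $\int (1-2M/r)^{-1}\total r$ at both endpoints by comparison estimates, whereas the paper uses the explicit antiderivative $r + 2M\ln(r-2M)$.
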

\begin{proof}
We employ Prop.~\ref{prop:selfadjoint-ssym} and Corr.~\ref{corr:ssym-globalhyperbolic} and have to verify the condition~\eqref{eq:ssym-condition} with $\alpha(r) = - \beta(r) = \frac{1}{2} \ln\left( 1 - \frac{2 M}{r} \right)$, which are smooth functions for $r > 2 M$. That is, we need to determine $R > 2 M$ such that
\begin{equation*}
\int_{2 M}^R \left( 1 - \frac{2 M}{r} \right)^{-1} \total r = \infty \quad \text{and} \quad \int_R^\infty \left( 1 - \frac{2 M}{r} \right)^{-1} \total r = \infty \eqend{.}
\end{equation*}
Using the indefinite integral $\int \left( 1 - \frac{2 M}{r} \right)^{-1} \total r = r + 2 M \ln(r-2 M)$, the condition holds for any $R > 2 M$, and we conclude.
\end{proof}
\end{example}
We note that the self-adjointness of the spatial part of the Klein--Gordon operator for the exterior Schwarzschild spacetime has also been proven in~\cite{Dimock:1987hi,Dimock:1986ar,Kay:1988mu} using different methods.

\begin{example}
A static, spherically symmetric solution to the Einstein equations describing a charged black hole is given by the Reissner--Nordström Spacetime $(\man,g_\text{RN})$ with metric
\begin{equation*}
g_\text{RN} = - \left( 1 - \frac{2 M}{r} + \frac{Q^2}{r^2} \right) \total t^2 + \left( 1 - \frac{2 M}{r} + \frac{Q^2}{r^2} \right)^{-1} \total r^2 + r^2 g_{\mathbb{S}^2} \eqend{.}
\end{equation*}
Similar to the Schwarzschild spacetime, the region outside the outer event horizon $r > M + \sqrt{ M^2 - Q^2 }$ is globally hyperbolic and the spatial part of the Klein--Gordon operator is essentially self-adjoint on $C_0^\infty\left( (M + \sqrt{ M^2 - Q^2 },\infty) \times \mathbb{S}^2 \right)$. The proof is analogous to the Schwarzschild example Lemma~\ref{lemma:schwarzschild}.
\end{example}

\begin{example}
The de Sitter spacetime $(\man, g_\text{dS})$ is a maximally symmetric solution of the Einstein equations with positive cosmological constant $\Lambda > 0$. The static patch has the metric
\begin{equation*}
g_\text{dS} = - \left( 1 - \frac{r^2}{L^2} \right) \total t^2 + \left( 1 - \frac{r^2}{L^2} \right)^{-1} \total r^2 + r^2 g_{\mathbb{S}^2} \eqend{,}
\end{equation*}
where $t \in \mathbb{R}$, $r \in [0,L)$ and $L = \sqrt{\Lambda/3}$. This region is globally hyperbolic and the spatial part of the Klein--Gordon operator is essentially self-adjoint on $C_0^\infty\left( [0,L) \times \mathbb{S}^2 \right)$, and the proof is analogous to the previous examples. The surface $r = L$ is known as the cosmological horizon.
\end{example}

\begin{example}
The topological black hole~\cite{Vanzo:1997gw} has metric
\begin{equation*}
g_\text{TBH} = - V(r) \total t + \frac{1}{V(r)} \total r^2 + r^2 g_K \quad\text{with}\quad V(r) = \kappa - \frac{2 M}{r} - \frac{3}{\Lambda} r^2 \eqend{,}
\end{equation*}
where $t \in \mathbb{R}$, $r \geq 0$, and $g_K$ is the metric of a homogeneous space with Ricci tensor $R = \kappa g_K$. It is a solution of the Einstein equations with cosmological constant $\Lambda$, and we assume that $\Lambda = 3/L^2 > 0$. By a constant rescaling of the coordinates, we can furthermore fix $\kappa \in \{-1,0,1\}$. For $\kappa = 1$, we have $K = \mathbb{S}^2$ and obtain the Schwarzschild--de Sitter solution if the parameters satisfy $27 M^2 < L^2$. In this case, there are three real roots of the equation $V(r) = 0$ which read
\begin{equation*}
r_k = \frac{2 L}{\sqrt{3}} \cos\left[ \frac{(2k-1) \pi}{3} + \frac{1}{3} \arccos\left( \frac{3 \sqrt{3} M}{L} \right) \right] \eqend{,} \quad k \in \{0,1,2\} \eqend{,}
\end{equation*}
and we have $r_0 \geq r_1 \geq 0 \geq r_2$. The Schwarzschild--de Sitter black hole is globally hyperbolic if we restrict $r$ to the region $r \in (r_1,r_0)$ (between the event horizon at $r = r_1$ and the cosmological horizon at $r = r_0$), the spatial part of the Klein--Gordon operator is essentially self-adjoint on $C_0^\infty\left( (r_1,r_0) \times \mathbb{S}^2 \right)$, and the proof is analogous to the previous examples.
\end{example}
We note that all of the above also holds in dimensions $d > 4$, where the two-dimensional sphere $\mathbb{S}^2$ or homogeneous manifold $K$ is replaced by a $(d-2)$-dimensional one.

Lastly, we present a negative example:
\begin{example}
The Anti-de Sitter spacetime $(\man,g_\text{AdS})$ is a maximally symmetric solution of the Einstein equations with negative cosmological constant $\Lambda < 0$. The topology of Anti-de Sitter spacetime is $\man = \mathbb{S}^1 \times (0,\infty) \times \mathbb{S}^2$, and the spacetime contains closed timelike curves. However, in physics one usually works with the universal cover $\man_\infty = \mathbb{R} \times (0,\infty) \times \mathbb{S}^2$ and the metric
\begin{equation*}
g_\text{AdS} = - \left( 1 + \frac{r^2}{L^2} \right) \total t^2 + \left( 1 + \frac{r^2}{L^2} \right)^{-1} \total r^2 + r^2 g_{\mathbb{S}^2} \eqend{,}
\end{equation*}
where $L = \sqrt{-\Lambda/3}$. Nevertheless, even the universal cover is not globally hyperbolic due to its timelike boundary at spatial infinity $r \to \infty$, which allows light signals to travel to infinity and back in finite time. This can be checked using Corr.~\ref{corr:ssym-globalhyperbolic}, since the condition~\eqref{eq:ssym-condition} cannot be fulfilled: we have $\alpha(r) = - \beta(r) = \frac{1}{2} \ln\left( 1 + \frac{r^2}{L^2} \right)$ and
\begin{equation*}
\int_0^\infty \left( 1 + \frac{r^2}{L^2} \right)^{-2} \total r = \frac{\pi L}{4} < \infty \eqend{.}
\end{equation*}
Again, the same conclusion is obtained in all dimensions $d \geq 2$.
\end{example}

We summarize: a sufficient and explicit condition for static, spherically symmetric spacetimes to be globally hyperbolic and for the spatial part of the Klein--Gordon operator to be essentially self-adjoint is the fulfillment of Eq.~\eqref{eq:ssym-condition}, given that $\alpha$ and $\beta$ are smooth functions.

\section{Non-static spacetimes}
\label{sec:nonstatic}

The simplest and at the same time important examples of non-static spacetimes are the ones that are needed in cosmology:
\begin{example}
The Friedmann--Lema{\^\i}tre--Robertson---Walker (FLRW) spacetimes $(\man,g)$ are warped products $\man = (b,c) \times K$ with metric
\begin{equation*}
g = - \total t^2 + a(t)^2 g_K \eqend{,}
\end{equation*}
where $(K,g_K)$ is a homogeneous space with Ricci tensor $R = k g_K$ for a constant $k$, and $a \neq 0$ (the scale factor) is a smooth function.
\end{example}
\begin{lemma}
The FLRW spacetimes are globally hyperbolic, and the spatial part of the Klein--Gordon operator (with a potential $V$ satisfying the conditions given in Sec.~\ref{sec:kg}) is essentially self-adjoint on $C_0^\infty(K)$.
\end{lemma}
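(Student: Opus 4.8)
The plan is to treat the two assertions separately, each as a short corollary of the machinery assembled in Sec.~\ref{sec:global} and the criterion of \cite[Theorem 4.1]{Much:2018ikx}. For global hyperbolicity, I would observe that the FLRW metric $g = -\total t^2 + a(t)^2 g_K$ is precisely of the form treated in Prop.~\ref{prop:warpedproduct1}, with the one-dimensional base $(b,c)$, fiber $(K,g_K)$, and smooth positive warping function $f = a^2 \colon (b,c) \to (0,\infty)$ (positivity being guaranteed since $a \neq 0$ and $a$ is continuous, hence of constant sign, which we may take to be positive). Prop.~\ref{prop:warpedproduct1} then reduces global hyperbolicity of $(\man,g)$ to completeness of the Riemannian manifold $(K,g_K)$, and this is immediate from Lemma~\ref{lemma:hom} since $(K,g_K)$ is homogeneous by assumption. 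This covers in one stroke the three standard spatial geometries $\mathbb{R}^3$, $\mathbb{S}^3$, $\mathbb{H}^3$ and their quotients, as well as the higher-dimensional analogues.

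For the essential self-adjointness, I would note that the FLRW metric already has the foliated form~\eqref{eq:foliation} with lapse $N = 1$ and spatial metric $h(t) = a(t)^2 g_K$ on $\Sigma = K$. By \cite[Theorem 4.1]{Much:2018ikx} (using that the conditions on the potential $V$ are assumed throughout Sec.~\ref{sec:kg}), the spatial part $w^2$ of the Klein--Gordon operator is essentially self-adjoint on $C_0^\infty(K)$ provided the Riemannian manifold $(\Sigma,\tilde h) = (K, N^{-2} h) = (K, a(t)^2 g_K)$ is geodesically complete for each fixed $t$. For fixed $t$, $a(t)^2$ is a positive constant, so $a(t)^2 g_K$ is a constant rescaling of $g_K$; in particular $a(t)^{-2}\, g_K \le a(t)^{-2}\, (a(t)^2 g_K) \le \cdots$ trivially gives the sandwich bound of Prop.~\ref{prop:bounded} relative to $g_0 = g_K$ (or one simply notes that a homothety of a homogeneous metric is homogeneous). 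Hence $(K, a(t)^2 g_K)$ is complete for every $t$ by Prop.~\ref{prop:bounded} together with Lemma~\ref{lemma:hom}, and the Hopf--Rinow theorem upgrades this to geodesic completeness, so \cite[Theorem 4.1]{Much:2018ikx} applies.

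I do not anticipate a genuine obstacle here: both parts are direct verifications that the hypotheses of previously stated results are met, the only minor point requiring care being the observation that the $t$-dependence of the scale factor is harmless because for each fixed Cauchy slice it amounts to an overall constant conformal rescaling of the homogeneous fiber metric, which preserves completeness. If one wished to avoid invoking Prop.~\ref{prop:bounded}, one could equally well phrase both statements as instances of Prop.~\ref{prop:warpedproduct1} and of the alternative self-adjointness route used in the proof of Prop.~\ref{prop:selfadjoint-wh}, but the argument above is the most economical.
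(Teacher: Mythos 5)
Your proposal is correct and follows essentially the same route as the paper: global hyperbolicity via Prop.~\ref{prop:warpedproduct1} together with Lemma~\ref{lemma:hom}, and essential self-adjointness via \cite[Theorem 4.1]{Much:2018ikx} after noting that for each fixed $t$ the spatial metric $a(t)^2 g_K$ is a constant rescaling of the complete homogeneous metric $g_K$, hence complete. The paper phrases the last step directly as an equivalence of norms rather than invoking Prop.~\ref{prop:bounded}, but this is the same observation.
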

\begin{proof}
Global hyperbolicity follows from Prop.~\ref{prop:warpedproduct1} with $a \to b$, $b \to c$ and $f = a^2$, since $(K,g_K)$ is complete by Lemma~\ref{lemma:hom}. The FLRW metric has the form~\eqref{eq:foliation} with the lapse $N = 1$ and the spatial metric $h = a(t)^2 g_K$. Using~\cite[Theorem 4.1]{Much:2018ikx}, the conclusion thus follows if the Riemannian manifold $(K,h)$ is geodesically complete for each fixed $t$. Since $(K,g_K)$ is complete and for each fixed $t$ the spatial metric $h$ is a rescaling of $g_K$, thus inducing an equivalent norm, it follows that $(K,h)$ is complete and we conclude by the Hopf--Rinow theorem.
\end{proof}

\begin{example}
The type I Bianchi universes $(\man = (b,c) \times \mathbb{R}^3, g_\text{BI})$ are the simplest generalizations of the flat FLRW spacetimes if one does not assume isotropy. Including a distinct scale factor for each Cartesian direction, their metric reads
\begin{equation*}
g_\text{BI} = - \total t^2 + a_x(t)^2 \total x^2 + a_y(t)^2 \total y^2 + a_z(t)^2 \total z^2 \eqend{,}
\end{equation*}
where $a_x,a_y,a_z \neq 0$ are smooth functions.
\end{example}
\begin{lemma}
\label{lemma:bianchi1}
Type I Bianchi universes are globally hyperbolic. The spatial part of the Klein--Gordon operator (with a potential $V$ satisfying the conditions given in Sec.~\ref{sec:kg}) is essentially self-adjoint on $C_0^\infty(\mathbb{R}^3)$.
\end{lemma}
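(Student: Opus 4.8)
The plan is to dispatch the two assertions separately, reading off global hyperbolicity from the warped-product machinery of Sec.~\ref{sec:global} and essential self-adjointness from \cite[Theorem 4.1]{Much:2018ikx}.

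For global hyperbolicity, I would simply recognize $g_\text{BI}$ as a special instance of the metric in Prop.~\ref{prop:nonstatic-fourdim}. Setting $h(t) = 0$, $k(t) = \ln a_x(t)^2$, $l(t,x) = \ln a_y(t)^2$ and $m(t,x,y) = \ln a_z(t)^2$, the metric becomes exactly $g = - \mathe^{h(t)} \total t^2 + \mathe^{k(t)} \total x^2 + \mathe^{l(t,x)} \total y^2 + \mathe^{m(t,x,y)} \total z^2$. The only verification needed is smoothness of $k,l,m$: since $a_x,a_y,a_z$ are smooth and nowhere vanishing on the connected interval $(b,c)$, each has constant sign there, so each $a_i^2$ is smooth and strictly positive and its logarithm is smooth. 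Hence Prop.~\ref{prop:nonstatic-fourdim} applies verbatim and $(\man, g_\text{BI})$ is globally hyperbolic. (Equivalently, one could invoke the multiply warped product result Prop.~\ref{prop:multiplywarpedproduct}, with base the one-dimensional spacetime $((b,c), -\total t^2)$ and three copies of $(\mathbb{R}, \total x^2)$ as fibers, the latter being complete by Lemma~\ref{lemma:hom}.)

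For essential self-adjointness, I would note that $g_\text{BI}$ is already in the foliation form~\eqref{eq:foliation} with lapse $N = 1$ and spatial metric $h = a_x(t)^2 \total x^2 + a_y(t)^2 \total y^2 + a_z(t)^2 \total z^2$, so by \cite[Theorem 4.1]{Much:2018ikx} (together with the standing assumptions on $V$ from Sec.~\ref{sec:kg}) it suffices to show that $(\mathbb{R}^3, \tilde h = N^{-2} h = h)$ is geodesically complete for each fixed $t$. For fixed $t$ the quantities $a_x(t)^2, a_y(t)^2, a_z(t)^2$ are positive constants, so $h(t)$ is sandwich bounded by the Euclidean metric $\delta$, namely $c_1 \delta \leq h(t) \leq c_2 \delta$ with $c_1 = \min\{a_x(t)^2, a_y(t)^2, a_z(t)^2\} > 0$ and $c_2 = \max\{a_x(t)^2, a_y(t)^2, a_z(t)^2\} > 0$. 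Since $(\mathbb{R}^3,\delta)$ is complete by Lemma~\ref{lemma:hom}, Prop.~\ref{prop:bounded} gives completeness of $(\mathbb{R}^3, h(t))$, and the Hopf--Rinow theorem upgrades this to geodesic completeness. Thus $w^2$ is essentially self-adjoint on $C_0^\infty(\mathbb{R}^3)$.

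I do not expect a genuine obstacle: the lemma is a direct corollary of the general results of Sec.~\ref{sec:global} and \cite[Theorem 4.1]{Much:2018ikx}. The only points that need a moment's care are the two elementary upgrades above — that ``smooth and nonzero on a connected interval'' makes the squares smooth and positive (so their logarithms are smooth, allowing Prop.~\ref{prop:nonstatic-fourdim} to be applied as stated), and that freezing $t$ turns the spatial metric into a constant-coefficient metric on $\mathbb{R}^3$, which is manifestly sandwich bounded by the flat metric and hence complete.
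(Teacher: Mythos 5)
Your proposal is correct and takes essentially the same route as the paper: global hyperbolicity via Prop.~\ref{prop:nonstatic-fourdim} with the substitutions $h=0$, $k=\ln a_x^2$, $l=\ln a_y^2$, $m=\ln a_z^2$, and essential self-adjointness by writing the metric in the form~\eqref{eq:foliation} with $N=1$ and using the sandwich bound, Prop.~\ref{prop:bounded}, Lemma~\ref{lemma:hom} and \cite[Theorem 4.1]{Much:2018ikx}. Your extra remarks (the smoothness of the logarithms from $a_i$ smooth and nonvanishing, and the alternative via Prop.~\ref{prop:multiplywarpedproduct}) are harmless elaborations not present in, but consistent with, the paper's argument.
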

\begin{proof}
Global hyperbolicity follows from Prop.~\ref{prop:nonstatic-fourdim} with $h = 0$, $k = \ln a_x^2$, $l = \ln a_y^2$ and $m = \ln a_z^2$. The metric $g_\text{BI}$ has the form~\eqref{eq:foliation} with the lapse $N = 1$ and the spatial metric $h = a_x(t)^2 \total x^2 + a_y(t)^2 \total y^2 + a_z(t)^2 \total z^2$. Using~\cite[Theorem 4.1]{Much:2018ikx}, essential self-adjointness thus follows if the Riemannian manifold $(\mathbb{R}^3,h)$ is geodesically complete for each fixed $t$, which by the Hopf--Rinow theorem is equivalent to completeness. Since we have the bound
\begin{equation*}
\min( a_x^2(t), a_y^2(t), a_z^2(t) ) \eta \leq h \leq \max( a_x^2(t), a_y^2(t), a_z^2(t) ) \eta
\end{equation*}
with the flat metric $\eta = \total x^2 + \total y^2 + \total z^2$, we can apply Prop.~\ref{prop:bounded} (with $g_0 = \eta$, $a \to \min( a_x^2(t), a_y^2(t), a_z^2(t) )$ and $b \to \max( a_x^2(t), a_y^2(t), a_z^2(t) )$) to conclude, using that $(\mathbb{R}^3,\eta)$ is complete by Lemma~\ref{lemma:hom}.
\end{proof}
A special case of a Bianchi type I universe is the Kasner spacetime~\cite{kasner1921} with $a_x(t) = t^{p_1}$, $a_y(t) = t^{p_2}$ and $a_z(t) = t^{p_3}$, where the parameters $p_i$ fulfill the two conditions $p_1 + p_2 + p_3 = 1$ and $p_1^2 + p_2^2 + p_3^2 = 1$ and $t > 0$. In terms of the Lifshitz--Khalatnikov parameter $s \in [0,1]$, we may write~\cite{lifshitzkhalatnikov1961}
\begin{equation*}
p_1 = - \frac{s}{1+s+s^2} \eqend{,} \quad p_2 = \frac{1+s}{1+s+s^2} \eqend{,} \quad p_3 = \frac{s(1+s)}{1+s+s^2} \eqend{.}
\end{equation*}
The Kasner metric is significant because it approximates a general class of solutions near a cosmological singularity, see Ref.~\cite{wainwrightkrasinski2008} for details and references. By Lemma~\ref{lemma:bianchi1}, it is globally hyperbolic and the spatial part of the Klein--Gordon operator is self-adjoint (recalling that $t > 0$).

Lastly, we may also consider chaotic evolution:
\begin{example}
The type IX Bianchi or mixmaster~\cite{Misner:1969hg} universes $(\man,g_\text{BIX})$ have the metric
\begin{equation}
g_\text{BIX} = - \total t^2 + a(t)^2 \left[ \mathe^{\sqrt{3} \beta_-(t) + \beta_+(t)} \sigma_x^2 + \mathe^{- \sqrt{3} \beta_-(t) + \beta_+(t)} \sigma_y^2 + \mathe^{- 2 \beta_+(t)} \sigma_z^2 \right]
\end{equation}
with the one-forms
\begin{subequations}
\begin{align}
\begin{split}
\sigma_x &= \sin \theta \cos \phi \total \psi + \sin \psi \bigl( \cos \psi \cos \theta \cos \phi + \sin \psi \sin \phi \bigr) \total \theta \\
&\quad+ \sin \psi \sin \theta \bigl( \sin \psi \cos \theta \cos \phi - \cos \psi \sin \phi \bigr) \total \phi \eqend{,}
\end{split} \\
\begin{split}
\sigma_y &= - \sin \theta \sin \phi \total \psi + \sin \psi \bigl( - \cos \psi \cos \theta \sin \phi + \sin \psi \cos \phi \bigr) \total \theta \\
&\quad- \sin \psi \sin \theta \bigl( \sin \psi \cos \theta \sin \phi + \cos \psi \cos \phi \bigr) \total \phi \eqend{,}
\end{split} \\
\sigma_z &= \cos \theta \total \psi - \cos \psi \sin \psi \sin \theta \total \theta - \sin^2 \psi \sin^2 \theta \total \phi \eqend{,}
\end{align}
\end{subequations}
where $a$ (the scale factor) and $\beta_\pm$ (the shape parameters) are smooth functions of $t$, $\psi, \theta \in [0,\pi)$ and $\phi \in [0,2\pi)$. Here, we have chosen to use the usual hyperspherical coordinates~\cite[App.~B]{Taub:1950ez} instead of the Euler angles that are prevalent in Bianchi IX cosmology and which stem from the SU(2) approach to the symmetries of the three-sphere~\cite{King:1991jd}. While the algebraic symmetry relations
\begin{equation}
\total \sigma_x = \sigma_y \wedge \sigma_z \eqend{,} \quad \total \sigma_y = \sigma_z \wedge \sigma_x \eqend{,} \quad \total \sigma_z = \sigma_x \wedge \sigma_y
\end{equation}
take longer to verify in these coordinates, they have the advantage that the isotropic limit $\beta_\pm \to 0$ directly leads to the closed FLRW metric:
\begin{equation*}
\lim_{\beta_\pm \to 0} g_\text{BIX} = - \total t^2 + a(t)^2 \left[ \total \psi^2 + \sin^2 \psi \left( \total \theta^2 + \sin^2 \theta \total \phi^3 \right) \right] = - \total t^2 + a(t)^2 g_{\mathbb{S}^3} \eqend{.}
\end{equation*}
\end{example}
\begin{lemma}
\label{lemma:bianchi9}
Mixmaster universes are globally hyperbolic. The spatial part of the Klein--Gordon operator (with a potential $V$ satisfying the conditions given in Sec.~\ref{sec:kg}) is essentially self-adjoint on $C_0^\infty(\mathbb{R}^3)$.
\end{lemma}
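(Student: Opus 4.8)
The plan is to handle the two assertions separately: essential self-adjointness will be an immediate application of the apparatus of Sec.~\ref{sec:kg}, while global hyperbolicity needs a short extra argument, since --- in contrast to the FLRW or Bianchi~I cases --- the shape parameters $\beta_\pm$ make the spatial metric depend on $t$ in a way that is not a mere rescaling, so none of Props.~\ref{prop:warpedproduct1}--\ref{prop:multiplywarpedproduct} applies directly.

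For essential self-adjointness: the metric $g_\text{BIX}$ is already of the form~\eqref{eq:foliation} with lapse $N = 1$ and spatial metric $h(t) = a(t)^2 \bigl[ \mathe^{\sqrt{3}\beta_-(t)+\beta_+(t)} \sigma_x^2 + \mathe^{-\sqrt{3}\beta_-(t)+\beta_+(t)} \sigma_y^2 + \mathe^{-2\beta_+(t)} \sigma_z^2 \bigr]$ on $\mathbb{S}^3 \cong SU(2)$, where the $\sigma_i$ are the left-invariant one-forms and $g_{\mathbb{S}^3} = \sigma_x^2 + \sigma_y^2 + \sigma_z^2$ (consistent with the isotropic limit). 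At fixed $t$ the coefficients are constants, so $h(t)$ is left-invariant, hence homogeneous, hence complete by Lemma~\ref{lemma:hom}; thus $(\Sigma,\tilde h = N^{-2}h) = (\mathbb{S}^3, h(t))$ is geodesically complete for every $t$. Moreover the Ricci scalar of the spatially homogeneous $g_\text{BIX}$ depends only on $t$, so the rescaled potential $N^2 V$ is constant on each fibre and the hypotheses on $V$ in~\cite[Theorem~4.1]{Much:2018ikx} hold automatically; that theorem then yields essential self-adjointness of $w^2$ on $C_0^\infty(\mathbb{S}^3)$.

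For global hyperbolicity I would show directly that every slice $\Sigma_{t_0} = \{t_0\} \times \mathbb{S}^3$ is a Cauchy surface, using compactness of the fibre in place of a warping factor. The coordinate $t$ is a time function: its gradient equals $-\partial_t$, which is everywhere timelike, so $t$ is strictly monotone along every causal curve and $\Sigma_{t_0}$ is met at most once. Conversely let $\gamma$ be an inextendible causal curve and $t_+ = \sup(t\circ\gamma) \le c$. If $t_+ < c$, then on the compact interval $[t_{\mathrm{in}}, t_+]$ (with $t_{\mathrm{in}}$ the value of $t$ at the start of $\gamma$) one has uniform bounds $c_0\, g_{\mathbb{S}^3} \le h(t) \le C_0\, g_{\mathbb{S}^3}$ with $c_0 > 0$; reparametrising $\gamma$ by $t$, causality gives $h(t)(\dot x,\dot x) \le 1$, whence the $g_{\mathbb{S}^3}$-length of the spatial projection $x(t)$ up to $t_+$ is at most $(t_+ - t_{\mathrm{in}})/\sqrt{c_0} < \infty$. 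Completeness of $(\mathbb{S}^3, g_{\mathbb{S}^3})$ (Lemma~\ref{lemma:hom}) then forces $x(t)$ to converge as $t \to t_+$, so $\gamma$ extends beyond $t_+$ --- a contradiction. Hence $\sup(t\circ\gamma) = c$ and, symmetrically, $\inf(t\circ\gamma) = b$, so the range of $t\circ\gamma$ is all of $(b,c)$ and in particular contains $t_0$; thus $\Sigma_{t_0}$ is a Cauchy surface and $(\man, g_\text{BIX})$ is globally hyperbolic.

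The main obstacle is exactly this global-hyperbolicity step, because the collected results of Sec.~\ref{sec:global} do not include a ready-made statement for a genuinely anisotropic, $t$-dependent spatial metric over a compact fibre (and Prop.~\ref{prop:bounded} is a Riemannian, not a Lorentzian, sandwich statement). A possible alternative is a light-cone comparison: choose a smooth positive $f$ with $f(t)$ below all three coefficient functions of $h(t)$, note that $-\total t^2 + f(t)\, g_{\mathbb{S}^3}$ is globally hyperbolic by Prop.~\ref{prop:warpedproduct1} and Lemma~\ref{lemma:hom} and has wider null cones than $g_\text{BIX}$, and invoke the fact that narrowing the light cones preserves global hyperbolicity --- but since that fact is not among the stated propositions, the self-contained argument above seems preferable. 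A minor technical point is that the pointwise minimum of the three coefficient functions is only Lipschitz in $t$, so any auxiliary bounding function should be taken merely continuous (enough for the estimates above) or smoothed in advance.
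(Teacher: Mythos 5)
Your proposal is correct, but it takes a genuinely different route from the paper, whose proof of Lemma~\ref{lemma:bianchi9} is a one-liner: ``proceed analogously to Lemma~\ref{lemma:bianchi1}'', i.e.\ for completeness of the slices sandwich $h(t)$ between $\min(\cdot)\,g_{\mathbb{S}^3}$ and $\max(\cdot)\,g_{\mathbb{S}^3}$ and apply Prop.~\ref{prop:bounded} together with Lemma~\ref{lemma:hom} for $(\mathbb{S}^3,g_{\mathbb{S}^3})$, and for global hyperbolicity invoke the analogue of the Bianchi~I argument based on Prop.~\ref{prop:nonstatic-fourdim}. For the self-adjointness part you argue instead that $h(t)$ is itself left-invariant on $\mathbb{S}^3 \cong \mathrm{SU}(2)$ (constant coefficients in the invariant coframe), hence homogeneous and complete directly by Lemma~\ref{lemma:hom}; this is equivalent to the paper's sandwich argument and slightly more direct, and your extra remark about $N^2V$ is harmless since the lemma already assumes the conditions of Sec.~\ref{sec:kg}. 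For global hyperbolicity you correctly observe that Prop.~\ref{prop:nonstatic-fourdim} is stated for an iterated warped product built from coordinate one-forms on $(a,b)\times\mathbb{R}^3$ and does not literally apply to the non-closed invariant coframe $\sigma_x,\sigma_y,\sigma_z$ on $\mathbb{S}^3$, and you replace the paper's appeal to analogy by a direct verification that each slice $\{t_0\}\times\mathbb{S}^3$ is a Cauchy surface, via monotonicity of $t$ along causal curves and the finite-length/completeness estimate on compact $t$-intervals; this is the standard argument underlying Prop.~\ref{prop:warpedproduct1}, adapted to a $t$-dependent, sandwich-bounded spatial metric, and it is sound (your alternative via narrowing the light cones of $-\total t^2 + f(t)\,g_{\mathbb{S}^3}$ would also work, though, as you note, that comparison statement is not among the collected propositions). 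What your version buys is a self-contained closure of the step the paper leaves implicit; what the paper's version buys is brevity by reusing the Bianchi~I machinery. One small point in your favour: the natural domain is $C_0^\infty(\mathbb{S}^3)$ as you write, the $C_0^\infty(\mathbb{R}^3)$ in the statement being evidently carried over from the Bianchi~I case.
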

\begin{proof}
The proof proceeds analogously to the one of Lemma~\ref{lemma:bianchi1}, where in the last step we use that instead of $(\mathbb{R}^3,\eta)$, also $(\mathbb{S}^3, g_{\mathbb{S}^3})$ is complete by Lemma~\ref{lemma:hom}.
\end{proof}

\appendix

\section*{Acknowledgements}

\parindent=0pt
This work was supported and funded by \textit{Kuwait University}, Research Project No. SM06/21.

\bibliographystyle{elsarticle-num}
\bibliography{newGH.bib}
	
\end{document}